\newcommand{\cl}[1]{\mathcal{#1}} % for caligraphic symbols
\newcommand{\NN}{\mathbb{N}}      % for Naturals
\newcommand{\EE}{\mathbb{E}}	  % for expectation
\DeclareMathOperator*{\argmax}{argmax}
\setlist[itemize,1]{leftmargin=\dimexpr 13pt}
\setlist[enumerate,1]{leftmargin=\dimexpr 13pt}
\renewcommand\footnotetextcopyrightpermission[1]{} % removes footnote with conference information in first column
\begin{document}

\fancyhead{}  
\title{Incentive-Compatible Diffusion}
\titlenote{This project has received funding from the European Research Council (ERC) under the European Union's Horizon 2020 research and innovation programme (grant agreement  n$^o  $ 740435).}

\author{Yakov Babichenko}
\affiliation{%
	\institution{Technion --- Israel Institute of Technology}
	\city{Haifa} 
	\country{Israel}}

\email{yakovbab@tx.technion.ac.il}

\author{Oren Dean}

\affiliation{%
	\institution{Technion --- Israel Institute of Technology}
	\city{Haifa} 
	\country{Israel}}
\email{orendean@campus.technion.ac.il}

\author{Moshe Tennenholtz}
\affiliation{%
	\institution{Technion --- Israel Institute of Technology}
	\city{Haifa} 
	\country{Israel}}
\email{moshet@ie.technion.ac.il}

\begin{abstract}
Our work bridges the literature on incentive-compatible mechanism design and the literature on diffusion algorithms.
We introduce the study of finding an incentive-compatible (strategy-proof) mechanism for selecting an influential vertex in a directed graph (e.g. Twitter's network). The goal is to devise a mechanism with a bounded ratio between the maximal influence and the influence of the selected user, and in which no user can improve its probability of being selected by following or unfollowing other users. We introduce the \emph{Two Path} mechanism which is based on the idea of selecting the vertex that is the first intersection of two independent random walks in the network. The Two Path mechanism is incentive compatible on directed acyclic graphs (DAGs), and has a finite approximation ratio on natural subfamilies of DAGs. Simulations indicate that this mechanism is suitable for practical uses.

\end{abstract}

\keywords{social networks, diffusion, strategy-proof, mechanism design}

\maketitle

\section{Introduction}
	
	The problem of selecting influential users in a network, has been extensively studied in the past decade (\cite{PPRank,Sun:2012:IIU:2310057.2310059,doi:10.1509/jmkr.47.4.643}). These works mainly focus on the problem of maximizing diffusion in a \emph{given} network. If we assume that users of the network are willing to be selected by the mechanism, it is desirable that the selection mechanism will not only have good diffusion performance, but also be \emph{incentive compatible}; namely, that a user could not affect his probability of being selected by strategic erasure (or formation) of links. For instance, suppose that Twitter would like to start a new epidemic trend. It would like to find a very influential user and give him some benefit in order to kickoff its campaign. If the users are aware of these intents, they might remove or add out-going links in order to get chosen. A similar problem would be to select an academic paper in a particular field and awarding it as `the most influential' in this field; measuring the influence according to quotes from other papers and taking into account both direct and propagated influence.
	
	The following simple example demonstrates the tendency between diffusion maximization and incentive compatibility.
	
	\begin{tikzpicture}[scale=1, line width=0.4mm]

		\node[label={[font=\normalsize, blue]},circle,draw, text=black](A) at (0,0){A};
		\node[label={[font=\normalsize, blue]},circle,draw, text=black](B) at (-20pt,-20pt){B};
		\node[label={[font=\normalsize, blue]},circle,draw, text=black](C) at (20pt,-20pt){C};
		\node[label={[font=\normalsize, blue]},circle,draw, text=black](D) at (0pt,-40pt){D};
		\node[label={[font=\normalsize, blue]},circle,draw, text=black](E) at (40pt,-40pt){E};

		\draw[->] (B) -- (A);
		\draw[->] (C) -- (A);
		\draw[->] (D) -- (C);
		\draw[->] (E) -- (C);
		
		\node[label={[font=\normalsize, blue]},circle,draw, text=black](A') at (120pt,0pt){A};
		\node[label={[font=\normalsize, blue]},circle,draw, text=black](B') at (100pt,-20pt){B};
		\node[label={[font=\normalsize, blue]},circle,draw, text=black](C') at (140pt,-20pt){C};
		\node[label={[font=\normalsize, blue]},circle,draw, text=black](D') at (120pt,-40pt){D};
		\node[label={[font=\normalsize, blue]},circle,draw, text=black](E') at (160pt,-40pt){E};

		\draw[->] (B') -- (A');
		\draw[->] (D') -- (C');
		\draw[->] (E') -- (C');
		
		\draw(10pt,-50pt) node[below] {$G_1$};
		\draw(130pt,-50pt) node[below] {$G_2$};
		
		\end{tikzpicture}
		
A directed edge from $u$ to $v$ indicates that $u$ follows (or cites) $v$. In any plausible diffusion model, the most influential vertex in network $G_1$ is $A$, while the most influential vertex in network $G_2$ is $C$. Suppose the mechanism simply selects the most influential vertex. Then user $C$, knowing that the network is $ G_1 $, might have an incentive not to follow user $A$ (although she is interested in user's $A$ content) in order to be selected by the mechanism.

As a first step towards understanding incentive compatible diffusion mechanisms, our main focus in this paper is on the problem of selecting a single user when the graph is acyclic. Note that in environments where users arrive in horological order (e.g., scientific papers or scale-free networks; see, for example, \cite{2000PhRvL..85.4633D,2002AdPhy..51.1079D}) acyclic graphs are indeed a good proxy for the actual environment.

One of the simplest ideas to develop mechanisms (hopefully incentive compatible) that select vertices with (hopefully) high diffusion is the following: choose a node $v_1$ (at random) and ask him to mention a single user $v_2$ that he follows. Informally speaking, by arguments in the spirit of the friendship paradox (\cite{Feld91, Lattanzi:2015:PRN:2684822.2685293}), we expect that $v_2$ will have better diffusion than $v_1$. We can then ask $v_2$ who is his friend, etcetera. Thus, proceeding along a path increases (in expectation) the influence of the observed user. Two obstacles arise with this idea:
\begin{itemize}
	\item When do we stop the process?
	
	\item What if a node reports that he does not follow anyone?\footnote{Note that we cannot stop and select him, as this will incentive every node to report that he follows no one.}.
\end{itemize}
The \emph{Two Path} mechanism that we suggest is based on this simple idea, but instead of tracking a single path, it tracks two. 
Regarding the first obstacle, we now have a natural candidate for the selected user --- the first intersection of the two paths. Regarding the second obstacle, if both paths have ended without intersecting, we simply re-execute the process with a modification that all the already tracked users cannot be selected (this is needed for the incentive compatibility). Note that the (informally) presented mechanism is very simple for implementation in practical settings and it requires only partial knowledge about the network (i.e., full revelation of information by all users is not needed).

Our results are as follows. In Proposition~\ref{pro:ic} we show that the Two Path mechanism is indeed incentive compatible on DAGs. In  Theorem~\ref{thm: tree} we show that the Two Path mechanism performs very well on trees and achieves an approximation ratio of 1.5.
In Theorem~\ref{thm: forests} we show that for forests which are \emph{balanced} (see Definition~\ref{dfn: balance}) the Two Path mechanism achieves a constant approximation ratio\footnote{The balanceness condition is necessary as demonstrated in Example \ref{ex:forest}.}. However, on DAGs (even balanced), the Two Path mechanism might not perform so well, as demonstrated in Example~\ref{ex:dag}. In Section~\ref{sec: analytic 2p} we develop a mechanism that is based on the analysis of the distribution induced by Two Path, and achieve a constant approximation ratio for DAGs that are both balanced, and \emph{monotone} (i.e., DAGs in which a user is always more influential than his followers; see Theorem \ref{thm: DAG}). In Section~\ref{sec: general graphs} we extend the Two Path mechanism to an incentive-compatible mechanism for general networks. Test results of this extension and of the Two Path mechanism on simulated scale-free graphs are presented in Section~\ref{sec: tests}.

	\subsection{Related work}
	Our work is a merger of two branches of literature. Namely, the study of incentive-compatible (IC) selection mechanisms and the study of diffusion models in networks. In this paper we offer, for the first time, IC mechanisms which try to maximize the overall diffusion. 
	
	\subsubsection{Incentive-compatible mechanisms}
Incentive compatibility has been studied in many different contexts. The most relevant to ours are the following.
Alon et al. (\cite{AFPT11}) and Fischer and Klimm (\cite{Fischer:2014:OIS:2600057.2602836}) studied incentive compatible mechanisms in networks. The goal there is to select a vertex with a good approximation to the maximal in-degree. The in-degree can be viewed as a "one-step diffusion". In our work, on the other hand, we focus on a more complex diffusion process.
	Holzman and Moulin (\cite{HM13}) introduced an axiomatic approach to the problem of selecting a prize winning paper based on peer-reviews. In their model, each agent reports one other nominee for a prize and the mechanism selects one winner based on these votes. They presented a set of desirable features a selection mechanism should posses, and asked which subsets of these features we may/ may not have together with incentive-compatibility.
Their paper focuses on self-selection of a winner, which again can be viewed as "one-step diffusion".	
	\\
	More recent works with the same theme can be found in Kurokawa et al. (\cite{Kurokawa:2015:IPR:2832249.2832330}), 
	Aziz et al. (\cite{Aziz:2016:SPS:3015812.3015872}) and Tamura (\cite{TAMURA201641}).
	
	\subsubsection{Diffusion in networks}
	The diffusion of information in networks (and more specifically, social networks) has been extensively studied ever since the seminal paper of Kempe, Kleinberg and Tardos (\cite{KKT}). This literature focuses on maximization of diffusion in a \emph{given} network. In our work we assume that the designer is unaware of the network structure, and asks this information from the users. 
	In \cite{KKT}, the authors have presented two diffusion models --- linear threshold and independent cascade, and offered an algorithm to select a $ k $-subset of the vertices. Their purpose was to show an algorithmically-efficient (i.e., polynomial) mechanism with a finite approximation ratio, but not necessarily IC.\\
	Our influence model relates the influence of vertex $ x $ over vertex $ y $ to the probability of reaching $ x $ in a random walk starting at $ y $. Using random walks to measure popularity/influence is not new either: Gualdi, Medo and Zhang (\cite{0295-5075-96-1-18004}) used the same influence model to rank the influence of academic papers (in their model the graph is acyclic); the popular search engine, Google, uses random walks in its ranking algorithm, called PageRank (\cite{Brin:1998:ALH:297810.297827}); and Andersen et al. (\cite{Andersen:2008:TRS:1367497.1367525}) used random walk in their suggestion of a trust-based recommendation system.\\
	Further reading on networks, their structures and dynamics can be found in the monograph of Easley and Kleinberg (\cite{EK10}) and in \cite{Bornholdt:2003:HGN:640635}.

	\section{The model}\label{sec: diffusion definitions}
	
	\subsection{Network and Diffusion}
	
	A network is a directed graph $ G(V,E) $, where a vertex is a Twitter user (academic paper) and an edge from $ y $ to $ x $ means that $ y $ follows (quotes) $ x $. We denote the in-neighbours (followers) and the out-neighbours (followees) of vertex $ x $ by
	\begin{displaymath}
	N(x):=\{v\in V| (v,x)\in E \};\;
	F(x):=\{v\in V| (x,v)\in E \},
	\end{displaymath}

	respectively, and its out-degree by $ d_o(x)=|F(x)| $.
Our notion of diffusion is defined as follows. We denote by $ \cl{P}_{y,x} $ the family of all simple paths (no vertex repetition) from $ y $ to $ x $. The influence of $ x $ on $ y $ is defined to be
	\begin{displaymath}
	 I(x,y):= \begin{cases}
	\sum\limits_{P\in\cl{P}_{y,x}} \,\prod\limits_{(i,j)\in P}\,\dfrac{1}{d_o(i)}, &x\neq y\\
	1, &x=y
	\end{cases}, 
	\end{displaymath}	
	and the total influence of $ x $ is
	\begin{displaymath}
I(x):=\sum_{y\in V} I(x,y) .
	\end{displaymath}
	As we explain below, this diffusion model neatly relates the influence of a user to the probability of reaching it in a random walk, and is closely related to other well studied models of diffusion.	
	 
The rational behind our notion of diffusion is as follows. 

If $ y $ follows only $ x $, then he is a groupie fan of $ x $ and there is a high probability that he will be affected by any trend introduced by $ x $. If on the other hand, $ y $ follows $ x $ and some other 99 users, then $ x $'s influence over $ y $ is much lower. Moreover, if $ x $ influences $ y $ and $ y $ in turn influences $ z $, then $ x $ has some indirect influence over $ z $. Concretely, we assume that each user divides its attention uniformly between those he follows. A message from user $ x $ diffuses along each path \emph{backwards} with probability equal to the multiplication of the `attention' on the edges.

To gain some intuition about the notion of diffusion, we demonstrate an example of a network and calculate the influence of each vertex. 
	\begin{example}\label{exm: exmample}
		Consider the following Twitter network with six nodes.
			
		\begin{center}
		\begin{tikzpicture}[scale=1, line width=0.4mm]

		\node[label={[font=\normalsize, blue]},circle,draw, text=black](A) at (0,0){A};
		\node[label={[font=\normalsize, blue]},circle,draw, text=black](B) at (40pt,0){B};
		\node[label={[font=\normalsize, blue]},circle,draw, text=black](C) at (20pt,-30pt){C};
		\node[label={[font=\normalsize, blue]},circle,draw, text=black](D) at (60pt,-30pt){D};
		\node[label={[font=\normalsize, blue]},circle,draw, text=black](E) at (90pt,-30pt){E};
		\node[label={[font=\normalsize, blue]},circle,draw, text=black](F) at (45pt,-60pt){F};

		\draw[->] (C) -- node[anchor=north, font=\normalsize,color=blue] {$ \tfrac{1}{3} $}(A);
		\draw[->] (C) -- node[anchor=east, font=\normalsize,color=blue] {$ \tfrac{1}{3} $} (B);
		\draw[bend left=20, ->] (C) to node[ font=\normalsize,color=blue] {$ \tfrac{1}{3} $}(F);
		\draw[->] (D) -- node[anchor=south, font=\normalsize,color=blue] {1}(C);
		\draw[->] (E) -- node[anchor=north, font=\normalsize,color=blue] {$ \tfrac{1}{2} $}(D);
		\draw[->] (E) -- node[anchor=south, font=\normalsize,color=blue] {$ \tfrac{1}{2} $}(B);
		\draw[bend left=20, ->] (F) to node[anchor=east, font=\normalsize,color=blue] {$ \tfrac{1}{2} $} (C);
		\draw[->] (F) -- node[anchor=west, font=\normalsize,color=blue] {$ \tfrac{1}{2} $}(D);

		\end{tikzpicture}
		\end{center}
%		\quad\\
		The edges weights in blue denote the `attention fraction' of this link. For example, node $ E $ follows $ D $ and $ B $, thus each link has a weight of 1/2. Suppose that node $ A $ posts a tweet with a recommendation for a new product. Our model assumes that with probability 1/3 node $ C $ will be affected. If $ C $ is affected, then with probability 1 $ D $ is affected. If $ D $ is affected, then with probability 1/2 $ E $ is affected. Node $ F $ will either be affected directly from $ C $ or from $ C $ through $ D $, thus his probability to be affected from $ A $ is 1/3. We get that starting with $ A $, the expected diffusion is $ 2\tfrac{1}{6} $ (1 for $ A $, $ 1/3 $ for $ C,D,F $ and $ 1/6 $ for $ E $).	Similarly, the influence of $ C $:	
\begin{flalign*}
	&I(C,A)=I(C,B)=0;\; I(C,C)=I(C,D)=1;\;I(C,E)=1\cdot\dfrac{1}{2}=\dfrac{1}{2};\\
	&I(C,F)=\dfrac{1}{2}+1\cdot\dfrac{1}{2}=1;
	\Rightarrow I(C)=3\tfrac{1}{2}.	
\end{flalign*}
The influence of the rest of the nodes:
\[ I(B)=2\tfrac{2}{3};\; I(D)=2;\; I(E)=1;\; I(F)=1\tfrac{5}{6}. \]		
	\end{example}

	We note here the relation of our influence measurement to other popular measures.
	\begin{enumerate}
		\item A \emph{random path} is a path generated by selecting a vertex uniformly at random and `walking' a random walk; each time selecting an out-edge uniformly at random, and stopping when we reach a previously visited vertex or a vertex with no out-edges. We can equivalently define $ I(x) $ to be the probability of visiting $ x $ in a random path, multiplied by $ |V| $.
		\item The \emph{progeny} of vertex $ x $ is the number of vertices which has a path to $ x $ (excluding $ x $). If $ G $ is a forest, then $ I(x) $ is the progeny of $ x $ plus one. Moreover, for any $ G $, let $ G'\subseteq G $ be a random graph generated by picking for each vertex, uniformly at random, one of its out-edges, and removing the rest of its edges. Then $ I(x) $ is the expected progeny of $ x $ in $ G' $ plus one. 
		\item Google's \emph{PageRank} (\cite{Brin:1998:ALH:297810.297827}) is an algorithm which takes as input a digraph $ G $, and a damping factor $ d\in[0,1] $, and outputs a probability distribution on $ V(G) $; this distribution represents the likelihood that an infinite random walk will arrive at any particular vertex. At each step, with probability $ d $ the random walk continues, and with probability $ 1-d $ it jumps to a random vertex. Denote the PageRank value of $ x $ with damping factor 1 by $ PR(x) $. Then $ PR(x) $ can be expressed as (\cite{PRWiki}) 
		\[ PR(x)=\sum_{y\in N(x)}\dfrac{PR(y)}{d_o(y)}. \]		
		\\
		If $ G $ is an acyclic directed graph (DAG), then we can relate the influence of $ x $ to the influence of its neighbours,
		\[ I(x)=1+\sum_{y\in N(x)}\dfrac{I(y)}{d_o(y)}. \]
		The similarity of these two equations is visible. For example, we can deduce that they induce the same ranking. For any $ k $ define $ I^{(k)}(x)=I(x)/k $. Clearly $ I^{(k)} $ induces the same ranking as $ I $; but  
		\[ I^{(k)}(x)=\dfrac{1}{k}+\sum_{y\in N(x)}\dfrac{I^{(k)}(y)}{d_o(y)}, \]
		and as $ k\to\infty $, this definition converges to that of $ PR $.
		\item Finally, consider the \emph{Independent Cascade} model for diffusion (\cite{Goldenberg2001,citeulike:6999265,KKT}). Independent Cascade is defined for weighted digraphs with weights in the interval $ [0,1] $. If we take a random graph $ G'\subseteq G $ by taking each edge independently with probability equal to its weight, then the Independent-Cascade diffusion measure of $ x $ is the expected progeny of $ x $ in $ G' $ plus 1. To see the difference from our model, consider the graph with two vertices $ x,y $, and two edges from $ y $ to $ x $, each with weight $ 1/2 $. Our model will give $ x $ an influence value of $ 2 $ (one for itself and one for $ y $), while the Independent-Cascade score of $ x $ is $ 1\dfrac{3}{4} $ (since there is only 3/4 chance that it will reach $ y $).
	\end{enumerate}

	\subsection{Incentive Compatibility}
	
	\noindent Next, we define what is a selection mechanism and the properties we will be interested in.
	\begin{definition}\label{dfn: mechanism}
		A \emph{selection mechanism} $ \cl{M} $ is a function which gives for every $ G(V,E) $ a probability distribution on $ V\cup \{\emptyset\} $.
	\end{definition}		 \noindent The empty-set value, $ \emptyset $, means that the mechanism has not selected any vertex. We denote by $ \Pr(\cl{M}(G)=x) $ the probability that the mechanism picks $ x\in V\cup \{\emptyset\} $ when the input is $ G $, and by $ \EE[\cl{M}(G)]=\EE[I(\cl{M}(G))] $ the expected influence of the selected vertex\footnote{We take $ I(\emptyset)=0 $ in the calculation of the expectation.}. When it is clear from the context what is the graph $ G $, we sometimes just write $ \Pr(\cl{M}=x)$ and $ \EE[\cl{M}] $. Let $ I^*=\max\limits_{v\in V(G)}I(v) $ be the \emph{maximal influence} in $ G $, and $ v^*=\{v\in V(G): I(v)=I^* \} $ be the set of \emph{optimal vertices}.\\
	Assume that $ \Pr(\cl{M}=x) $ is also the payoff function of $ x $. We would like our mechanism to be such that for any $ x\in V(G) $, it is a best action to report its true out-edges. In addition, we would like our mechanism to give a bounded ratio between the maximal influence and the expected influence of the selected vertex. Our main mechanism will be intended for the subfamily of directed graphs (DAGs); meaning, that only when the reported graph is in this subfamily, we require that it is IC and has bounded ratio.
	\begin{definition}
		A selection mechanism $ \cl{M} $ for the family of graphs $ \cl{G} $ is:
		\begin{itemize}[leftmargin=\dimexpr 26pt]
			\item \emph{incentive-compatible} (IC), if $\forall G\in\cl{G} $ and $ \forall x\in V(G) $, $ \Pr(\cl{M}(G)=x)=\max\limits_{G'\in\cl{G}_x} \Pr( \cl{M}(G')=x) $, where $ \cl{G}_x $ is the family of all graphs we get from $ G $ by adding and removing outgoing edges of $ x $\footnote{Note that we do not require $ \cl{G}_x\subseteq\cl{G} $.};
			\item \emph{efficient} with approx. ratio $ R $, if $ \forall G\in\cl{G}, \dfrac{I^*}{\EE[ \cl{M}(G)]}\leq R $. 
		\end{itemize}		
	\end{definition}

	\noindent  We consider the following four nested families.
	\begin{definition}
		Let $ G $ be a directed graph.
		\begin{itemize}[leftmargin=\dimexpr 26pt]
			\item $ G $ is a \emph{tree} if there is a unique vertex, called the \emph{root}, with no out edges and the rest of the vertices have precisely one out-edge.
			\item $ G $ is a \emph{forest} if it is a disjoint union of trees.
			\item $ G $ is \emph{monotone}\footnote{Observe that a forest is always monotone, and that a monotone graph is acyclic.} if for any edge $ (x,y)\in E(G) $,  $ I(x)<I(y) $.
			\item $ G $ is a \emph{directed acyclic graph (DAG)} if it contains no cycles. 
			
		\end{itemize}
	\end{definition}

	\section{The Two Path Mechanism}\label{sec: algorithmic 2p}
	We will now present the algorithm of the Two Path mechanism, which we denote $ \cl{M}_{2p} $. A \emph{random path} is a random walk which starts at a random vertex, and ends when we reach a vertex with no out-edges or when we return to a previously visited vertex. The idea of the Two Path mechanism is to start two independent random paths from two randomly chosen vertices.	If they intersect on an `unmarked' vertex, it is selected; if they intersect on a `marked' vertex, the mechanism returns `null'; if they do not intersect, all the vertices in these paths are marked, and the mechanism repeats. After presenting the algorithm we will prove it is IC in the family of DAGs; we will then analyse its approximation ratio in the family of trees and in the family of forests.
		
	\begin{algorithmic}[1]
		\State {$ U\gets \emptyset$}
		\While {$ U\neq V $}
		\State {Pick $ x\in V $ uniformly at random}
		\State {$ P_1\gets  $ random path starting at $ x $}
		\State {Pick $ y\in V $ uniformly at random}	
		\State {$ P_2\gets  $ random path starting at $ y $}
		\If {$ P_1\cap P_2=\emptyset $}
		\State {$ U\gets U\cup P_1\cup P_2 $}
		\Else
		\State {$z\gets$ the first\footnote{First according to the path $ P_1 $.} vertex in $ P_1\cap P_2  $}
		\If {$z\in U $}\State{}
		\Return {$ \emptyset $}
		\Else\State{}
		\Return {$ z $}
		\EndIf 
		\EndIf 
		\EndWhile
	\end{algorithmic}

	\begin{proposition}\label{pro:ic}
		Mechanism $ \cl{M}_{2p} $ is IC in the family of DAGs.
	\end{proposition}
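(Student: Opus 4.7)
The plan is to show that for every $G\in\cl{G}$, every $x\in V(G)$ and every $G'\in\cl{G}_x$, we have $\Pr(\cl{M}_{2p}(G)=x)\geq\Pr(\cl{M}_{2p}(G')=x)$. The underlying intuition is that $x$'s reported out-edges are consulted by a random walk only \emph{after} that walk first visits $x$, whereas the event ``$x$ is selected'' can be decomposed into sub-events that depend only on the portions of the two walks up to their first visit of $x$.

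First I would decompose by the round in which the mechanism halts: $\Pr(\cl{M}_{2p}=x)=\sum_{k\geq 1}\Pr(E_k)$, where $E_k$ is the event ``round $k$ is reached, $x$ is unmarked at its start, and $x$ is the first vertex of $P_1^{(k)}\cap P_2^{(k)}$''. Writing $D_j$ for ``round-$j$ paths are disjoint'' and $M_j$ for ``$x\notin P_1^{(j)}\cup P_2^{(j)}$'', the condition ``round $k$ reached and $x$ unmarked'' is exactly $\bigcap_{j<k}(D_j\cap M_j)$, since on the event $D_j$ the marked set grows by $P_1^{(j)}\cup P_2^{(j)}$. On $\bigcap_{j<k} M_j$ none of the first $k-1$ rounds ever consults $x$'s out-edges, so the joint law of those rounds coincides under $G$ and $G'$; combining this with the independence of distinct rounds yields
\[
\Pr(E_k)=\Pr\!\Bigl(\bigcap_{j<k}(D_j\cap M_j)\Bigr)\cdot\Pr(X_k),
\]
where $X_k$ denotes ``$x$ is the first vertex of $P_1^{(k)}$ that also lies in $P_2^{(k)}$'', and the first factor has the same value under $G$ and $G'$.

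The main step is to show $\Pr(X_k\mid G')\leq\Pr(X_k\mid G)$. I would write $X_k$ as the conjunction of (a) $P_1^{(k)}$ visits $x$, (b) $P_2^{(k)}$ visits $x$, and (c) no vertex strictly preceding $x$ on $P_1^{(k)}$ appears in $P_2^{(k)}$. Events (a) and (b) depend only on the prefixes of the two walks up to their first visit of $x$, and those prefixes have identical laws under $G$ and $G'$ since the two graphs agree outside of $x$'s out-edges. Splitting (c) into $c_1$ (no such vertex appears in $P_2^{(k)}$ strictly before $x$) and $c_2$ (no such vertex appears in $P_2^{(k)}$ strictly after $x$), I obtain
\[
\Pr(X_k\mid G')=\Pr(a\wedge b\wedge c_1\wedge c_2\mid G')\leq\Pr(a\wedge b\wedge c_1\mid G')=\Pr(a\wedge b\wedge c_1\mid G),
\]
the last equality because $a\wedge b\wedge c_1$ is determined by the prefixes of the two walks.

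The conceptually central observation is that under the true graph $G$ the event $c_2$ is \emph{automatic}, so that $\Pr(a\wedge b\wedge c_1\mid G)=\Pr(X_k\mid G)$. This is where the DAG hypothesis enters: any vertex $v$ strictly preceding $x$ on $P_1^{(k)}$ has a directed $v$-to-$x$ path in $G$, so if $v$ also appeared strictly after $x$ on $P_2^{(k)}$ there would be a directed $x$-to-$v$ path in $G$, producing a cycle and contradicting $G\in\cl{G}$. Combining the three pieces and summing the bound on $\Pr(E_k)$ over $k$ yields the desired IC inequality. I expect the main subtlety to be exactly this asymmetry between $G$ and $G'$: since $G'$ is allowed to have cycles (adding an out-edge from $x$ to an ancestor of $x$ creates one), the prefix events retain identical laws under both graphs, but the suffix event $c_2$ is vacuous only under $G$, giving the inequality its direction.
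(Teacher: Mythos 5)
Your proof is correct and takes essentially the same route as the paper's: both arguments rest on the two observations that $x$'s out-edges are consulted only after a walk first reaches $x$ (so all prefix/marking events have identical law under $G$ and $G'$), and that acyclicity of the true graph makes it impossible for $P_2$, after passing $x$, to hit a vertex preceding $x$ on $P_1$, so truthful reporting never hurts. Your write-up is simply a more careful round-by-round formalization of the paper's informal statement of these same two facts.
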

	\begin{proof}
		Notice that a vertex can be selected only in the first stage in which it is queried for its out-edges (afterwards it will either be selected or marked). It is enough, then, to show that in the first time a vertex is reached by one of the random paths, reporting its true edges is a best action. Suppose $ P_1 $ reaches vertex $ v $ and we query for its out-edges for the first time. Vertex $ v $ can only be selected if $P_2$ reaches it before it reaches any other vertex of $ P_1 $. Since $ G $ has no cycles, a true report by $ v $ will lead to vertices which can be reached by $ P_2 $ only after $ P_2 $ has `missed' $ v $. Hence reporting its true edges cannot hurt its chances of being selected. Surely, reporting additional edges cannot help it. If $ P_2 $ reaches an unmarked vertex $ u $, then either $ u $ is already selected (if $ u\in P_1 $) or it will never be selected regardless of the edges it reports. Hence this mechanism is incentive compatible.
	\end{proof}
	Mechanism $ \cl{M}_{2p} $ is generally not IC when the graph contains cycles, as demonstrated by the next example.
	\begin{example}	\label{ex: general graphs}
	Take the graph with two vertices, $ x,y $, and two edges $ (x,y), (y,x) $. Since $ P_1 $ always includes both vertices, the winner will be determined by the starting vertex of $ P_2 $. However, if we remove the edge $ (x,y) $, then $ x $ will be selected when \emph{either} $ P_1 $ or $ P_2 $ starts from $ x $. Thus $ x $ has an incentive to remove its edge to $ y $. 
	\end{example}
	\subsection{Two Path on the family of trees}
		When $ G $ is a tree, every two paths intersect, and mechanism $ \cl{M}_{2p} $ is guaranteed to return a vertex in the first stage. Suppose $ G $ is a path of length $ n $. The mechanism then returns either $ x $ or $ y $, whichever is further along the path. Thus, the expected result in this case is $ 2n/3 $. In the next theorem we show that this is the worse scenario when $ G $ is a tree. Hence, we find that the exact approximation ratio on trees is 1.5. This theorem is fundamental for the proof of the bound of the approximation ratio on forests (Theorem~\ref{thm: forests}).	
	\begin{theorem}\label{thm: tree}
		In the family of all trees, $ R_{\cl{M}_{2p}}=1.5 $.
	\end{theorem}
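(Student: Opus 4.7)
The plan is to establish the approximation ratio by matching lower and upper bounds of $3/2$.

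For the lower bound, consider the directed path $1 \to 2 \to \cdots \to n$. On this tree every non-root vertex has a unique out-edge, so the random path starting at $v$ is deterministic and equals $v, v+1, \ldots, n$. Hence for uniformly chosen starting vertices $X, Y$, $\cl{M}_{2p}$ returns $\max(X, Y)$. Since $I(k) = k$ on the path and $I^* = n$, a direct computation gives $\EE[I(\cl{M}_{2p})] = (n+1)(4n-1)/(6n)$, so the ratio tends to $3/2$ as $n \to \infty$, yielding $R_{\cl{M}_{2p}} \geq 3/2$.

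For the matching upper bound I would show that on every tree with $n$ vertices, $\EE[I(\cl{M}_{2p})] \geq 2n/3$. Two structural facts drive the argument: (a) on a tree the random path from $v$ is the unique path from $v$ to the root, so $\cl{M}_{2p}$ returns exactly $\text{LCA}(X, Y)$ for uniformly random $X, Y$; and (b) $I(v) = |T_v|$, the size of the subtree rooted at $v$, and in particular $I^* = I(\text{root}) = n$. The goal reduces to showing $\EE[|T_{\text{LCA}(X, Y)}|] \geq 2n/3$.

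I would prove this by induction on $n$. Let $r$ be the root with children $c_1, \ldots, c_k$ whose subtrees have sizes $n_1, \ldots, n_k$, so $\sum_i n_i = n - 1$. The events ``both $X, Y \in T_{c_i}$'' are pairwise disjoint with probabilities $(n_i/n)^2$; on their complement we have $\text{LCA}(X, Y) = r$. Applying the inductive hypothesis inside each $T_{c_i}$, the required inequality reduces to the algebraic claim
\[
\sum_{i=1}^{k} g(n_i) \leq n^3, \quad \text{where } g(a) := 3na^2 - 2a^3.
\]
A direct expansion gives the identity $g(a + b) - g(a) - g(b) = 6ab(n - a - b)$, so $g$ is superadditive whenever $a + b \leq n$. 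Since $\sum_i n_i = n - 1 \leq n$, iteratively combining the arguments yields $\sum_i g(n_i) \leq g(n-1) = (n-1)^2(n+2) \leq n^3$, which closes the induction.

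The main obstacle is pinpointing the correct scalar inequality: the inductive computation collapses to a cubic expression whose validity is not transparent at first glance, but the factorization $g(a+b) - g(a) - g(b) = 6ab(n-a-b)$ makes the required superadditivity, and hence the whole proof, go through cleanly.
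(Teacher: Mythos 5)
Your proposal is correct, and the lower-bound half (the path example with $\EE[\cl{M}_{2p}]=(n+1)(4n-1)/(6n)$) is identical to the paper's. The upper bound, however, is reached by a genuinely different route. The paper writes $\EE[\cl{M}_{2p}]=\tfrac{1}{n^2}\bigl(n^3-f(T)\bigr)$ with $f(T)=\sum_{v\neq\text{root}}I^2(v)\bigl(I(F(v))-I(v)\bigr)$ and proves, via a local surgery argument (detach a branch hanging at a vertex and re-attach it to the tip of a sibling branch, showing $f$ strictly increases), that the path is the tree maximizing $f$; the worst case is thus identified explicitly and then computed exactly. You instead prove $\EE[|T_{\mathrm{LCA}(X,Y)}|]\geq 2n/3$ directly by induction at the root, which reduces to $\sum_i g(n_i)\leq n^3$ for $g(a)=3na^2-2a^3$; your factorization $g(a+b)-g(a)-g(b)=6ab(n-a-b)$ gives superadditivity on partial sums bounded by $n$, whence $\sum_i g(n_i)\leq g(n-1)=n^3-3n+2\leq n^3$. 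I checked the algebra and the conditioning step (given both walks start in $T_{c_i}$, the starting points are i.i.d.\ uniform on $T_{c_i}$ and the LCA and subtree sizes are unchanged, so the inductive hypothesis applies); everything closes, including the base case $n=1$. The paper's approach buys the extra structural fact that the path is the unique extremal tree; yours is shorter, avoids the extremal characterization and the bookkeeping of the exchange lemma, and is arguably more transparent. Both arguments rest on the same two observations: that on a tree $\cl{M}_{2p}$ returns $\mathrm{LCA}(X,Y)$ in the first stage, and that $I(v)=|T_v|$.
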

	\begin{proof}
		For a vertex $ v $ we denote by $ T_v $ the subtree which is under $ v $. Since all the vertices, except the root, have one out-edge, $ I(v)=|T_v| $. Vertex $ v\in V $ is selected if and only if $ x,y\in T_v $ and they are not in the same proper subtree of $ T_v $. We get:
		\begin{flalign*}
		\Pr(\cl{M}_{2p}=v)&=\Pr(x,y\in T_v)-\sum_{u\in N(v)}\Pr(x,y\in T_u)\\
		&=\left(\dfrac{I(v)}{n}\right)^2-\sum_{u\in N(v)}\left(\dfrac{I(u)}{n}\right)^2.\\
		\EE[\cl{M}_{2p}]&=\sum_{v\in V}I(v)\Pr(v)=\dfrac{1}{n^2}\sum_{v\in V}\left (I^3(v)-I(v)\sum_{u\in N(v)}I^2(u) \right)\\
		&=\dfrac{1}{n^2}\left(\sum_{v\in V}I^3(v)-\sum_{v\in V\backslash\text{root}}I^2(v)\cdot I(F(v)) \right)\\
		&=\dfrac{1}{n^2}\left (n^3-\sum_{v\in V\backslash\text{root}}I^2(v)(I(F(v))-I(v)) \right ).
		\end{flalign*}
		We define the function $ f $ from the family of all trees of order $ n $ to $ \NN_+ $ by: 
		\[ f(T)= \sum_{v\in V(T)\backslash\text{root}}I^2(v)(I(F(v))-I(v)). \]
		\begin{lemma}\label{lem: path}
			Let $ \cl{P}_n $ be the path with $ n $ vertices. Then,  
			\[\cl{P}_n=\argmax\limits_{T \text{ is a tree of order } n}f(T).\]
		\end{lemma}
		We will first show that the lemma completes the proof. Indeed, if $ f(G) $ is maximized when $ G $ is a path, then $ \EE[\cl{M}_{2p}] $ is minimized in this case. When $ G $ is a path, the mechanism will always return either $ x $ or $ y $, whichever is further along the path. The expected value of the mechanism in this case is, therefore, the expected value of $ \max(a,b) $, where $ a,b $ are two independent, integral random variables, uniformly distributed in $ [1,n] $. We get
		\begin{flalign}
		\EE[\cl{M}_{2p}]=\sum_{i=1}^{n}i\left(\dfrac{2i-1}{n^2} \right) =\dfrac{1}{n^2}\cdot\dfrac{n(n+1)(4n-1)}{6}\geq \dfrac{2}{3}n=\dfrac{2}{3}I^*.\label{ineq: tree expectation}
		\end{flalign}
		In fact, $ \EE[\cl{M}_{2p}]/n\xrightarrow[]{n\to\infty} \dfrac{2}{3} $, hence we have found the exact value of $ R_{M_{2p}} $ in trees.
		\end{proof}
		 \noindent It remains to prove the lemma.
		\begin{proof}[Proof of Lemma~\ref{lem: path}]\belowdisplayskip=-12pt
			Let $ G $ be any tree of order $ n $ which maximizes $ f(T) $. Assume for contradiction that $ G $ is not a path. Then there is a vertex $ v $ such that there are two paths: $ A=\{a_1,\ldots, a_k,v\} $, $ B=\{b_1,\ldots, b_\ell, v\} $ such that $ a_1,b_1 $ are leafs and $ \{a_2,\ldots, a_k, b_2,\ldots,b_\ell \} $ all have in-degrees one. Let $ G' $ be the tree in which we remove the edge $ (a_k, v) $ and add the edge $ (a_k,b_1) $. It is enough to show that $ f(G')> f(G) $. Notice that the only vertices which have different contribution to the sums in $ f(G) $ and $ f(G') $ are $ a_k $ (since $ I(F(a_k)) $ is different) and $ b_1,\ldots,b_l $. Hence,
			\begin{flalign*}
			&f(G)-f(G')=[I^2(a_k)I(F(a_k))]|_{G}-[I^2(a_k)I(F(a_k))]|_{G'}\\
			&\qquad+\sum_{i=1}^{\ell}\left \{[I^2(b_i)(I(F(b_i))-I(b_i))]|_G-[I^2(b_i)(I(F(b_i))-I(b_i))]|_{G'}\right \}\\
			&=[I^2(a_k)I(v)]|_{G}-[I^2(a_k)I(b_1)]|_{G'}\\
			&\qquad+\sum_{i=1}^{\ell-1}\left \{[I^2(b_i)(I(b_{i+1})-I(b_i))]|_G-[I^2(b_{i})(I(b_{i+1})-I(b_i))]|_{G'}\right \}\\
			&\qquad+[I^2(\ell)(I(v)-I(\ell))]|_G-[I^2(\ell)(I(v)-I(\ell))]|_{G'}\\
			&=k^2(k+\ell+1)-k^2(k+1)\\
			&\qquad+\sum_{i=1}^{\ell-1}\left [\left (i^2(i+1-i)-(k+i)^2(k+i+1-(k+i)\right)\right ] \\
			&\qquad+\ell^2(k+\ell+1-\ell)-(k+\ell)^2(k+\ell+1-k-\ell)\\
			&=k^2\ell-\sum_{i=1}^{\ell-1}(k^2+2ki)+\ell^2(k+1)-(k+\ell)^2\\
			&=k^2-k\ell(\ell-1)+\ell^2(k+1)-(k+\ell)^2\\
			&=-k\ell<0. 
			\end{flalign*}
		\end{proof}

	\subsection{Two Path on the family of forests}
	Let $ G $ be a forest. Denote by $ S\subseteq V(G) $ the set of roots of $ G $. We denote $ s=|S| $. Suppose $ s $ is `large', e.g. $ s=\sqrt{n} $. Then in a single stage, there is a high probability that the two random paths will be in different trees and those two paths will be marked. We claim, however, that if the distribution of the orders of the trees is reasonably concentrated near the average, there is a positive probability that the first time the two paths intersect, they will be intersected in a tree in which no vertex is marked. This will imply, together with Theorem~\ref{thm: tree}, a bound on the approximation ratio. To be precise, define
	\[ \overline{I}=\dfrac{\sum_{r\in S}I(r)}{s}=\dfrac{n}{s}, \]
	the average influence of the sinks. We will prove the following.
	\begin{theorem}\label{thm: forests}
		For any forest $ G $,
		\begin{flalign*}
			\EE[\cl{M}_{2p}(G)]\geq 0.09\dfrac{n}{s}.
		\end{flalign*}
	\end{theorem}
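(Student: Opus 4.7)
My plan is to decompose $\EE[\cl{M}_{2p}(G)]$ by isolating, for each tree $T_i$ of $G$ and each round $r\geq 1$, a ``clean'' scenario in which the mechanism halts in round $r$ on an unmarked vertex of $T_i$. Let $T_1,\ldots,T_s$ denote the trees of $G$ and write $n_i=|T_i|$, $p_i=n_i/n$. Because the trees are disjoint components of $G$, every random path is contained in a single $T_i$, and two random paths intersect iff they begin in the same tree; hence in any round either an immediate intersection occurs inside one tree, or two vertex-disjoint subpaths lying in two distinct trees get marked.

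Next I would consider the pairwise-disjoint events
\[
X_{r,i} = \{\text{round } r \text{ is the first round with an intersection, that intersection lies in } T_i, \text{ and } T_i \text{ is untouched in rounds } 1,\ldots,r-1\}.
\]
Writing $S=\sum_{j}p_j^2$ and $\gamma_i = (1-p_i)^2 - \sum_{j\ne i}p_j^2 = 1 - (2p_i(1-p_i)+S)$ for the per-round probability that there is no intersection and $T_i$ is not touched, round-independence gives $\Pr(X_{r,i}) = \gamma_i^{r-1}p_i^2$, and summing the geometric series yields $\sum_{r\geq 1}\Pr(X_{r,i}) = p_i^2/(2p_i(1-p_i)+S)$. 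On $X_{r,i}$ the vertices $x_r,y_r$ are independent and uniform in $T_i$, both random paths stay in $T_i$, and $T_i$ has no marked vertex, so the output of round $r$ is distributed exactly as a run of $\cl{M}_{2p}$ on the isolated tree $T_i$; by Theorem~\ref{thm: tree} its expected influence is then at least $\tfrac{2}{3}I^*(T_i)=\tfrac{2}{3}n_i$ (the root of $T_i$ realises $I=n_i$).

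Assembling the pieces,
\[
\EE[\cl{M}_{2p}(G)] \;\geq\; \sum_{i=1}^{s}\sum_{r\geq 1}\Pr(X_{r,i})\cdot\tfrac{2}{3}n_i \;=\; \tfrac{2n}{3}\sum_{i=1}^{s}\frac{p_i^3}{2p_i(1-p_i)+S}.
\]
To close, I would bound the denominator by $2p_i+S$ and recognise $\sum_i p_i^3/(2p_i+S)$ as $\EE_{i\sim p}[g(p_i)]$ with $g(t)=t^2/(2t+S)$; a direct computation gives $g''(t)=2S^2/(2t+S)^3>0$, so Jensen's inequality yields $\EE_{i\sim p}[g(p_i)]\geq g(\EE_{i\sim p}[p_i]) = g(S) = S/3$, and the Cauchy--Schwarz bound $S\geq 1/s$ then gives $\EE[\cl{M}_{2p}(G)]\geq (2n)/(9s) > 0.09\,n/s$. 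The main conceptual hurdle is the design of $X_{r,i}$: it must be strong enough (cleanliness of $T_i$) to invoke the tree bound of Theorem~\ref{thm: tree} through the conditional distribution of round $r$, yet structured enough that the independence of rounds yields the tidy geometric-series formula for $\Pr(X_{r,i})$; once that decomposition is in place, the remainder is just the short convexity/Jensen estimate.
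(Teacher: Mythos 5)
Your proof is correct, and its skeleton is the same as the paper's: you define, for each tree, the disjoint events that the first intersection happens in that tree while it is still unmarked, sum the geometric series over rounds, and invoke the $2/3$ bound from Theorem~\ref{thm: tree} for the conditional expectation; your exact per-round probability $\gamma_i=(1-p_i)^2-\sum_{j\neq i}p_j^2$ and the bound $\sum_r\Pr(X_{r,i})\geq p_i^2/(2p_i+S)$ reproduce the paper's $\Pr(A_r)\geq I^2(r)/(n^2Z+2nI(r))$, and you arrive at exactly the paper's intermediate inequality $\EE[\cl{M}_{2p}]\geq \frac{2n}{3}\sum_i \frac{p_i^3}{2p_i+S}$ (their~(\ref{eq: sum forests})). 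Where you genuinely diverge is in how this sum is bounded from below. The paper partitions the roots into $S_1=\{r:Z\leq I(r)/n\}$ and its complement, bounds each part separately (using the power-mean inequality for $x^{3/2}$ on the second part), and balances the two cases via the parameter $\delta$ solving $(1-\delta)^{3/2}=\delta\approx 0.43$, which yields the constant $2\delta/9\approx 0.0955$. You instead observe that $\sum_i p_i^3/(2p_i+S)=\EE_{i\sim p}[g(p_i)]$ with $g(t)=t^2/(2t+S)$, verify $g''(t)=2S^2/(2t+S)^3>0$, and apply Jensen once to get $g(S)=S/3\geq 1/(3s)$. This is both shorter and strictly stronger: it gives $\EE[\cl{M}_{2p}]\geq \frac{2n}{9s}\approx 0.22\,\frac{n}{s}$, improving the paper's constant by more than a factor of two, and it is tight for the case of $s$ equal-sized trees, so $2/9$ is the best constant obtainable from this decomposition. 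The only point worth spelling out more carefully in a final write-up is the conditioning step: one should note that the event $\{x_r,y_r\in T_i\}$ is independent of the history of rounds $1,\dots,r-1$, so that conditional on $X_{r,i}$ the two starting points are indeed i.i.d.\ uniform on $T_i$ and the round behaves as a fresh run of $\cl{M}_{2p}$ on the isolated tree; this is exactly the (implicit) justification the paper uses for $\EE[\cl{M}_{2p}\mid A_r]\geq\frac{2}{3}I(r)$.
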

	We define the following measure for the balance of a forest.
\begin{definition}\label{dfn: balance}
	For $ \alpha\in (0,1]$, a forest $ G $ is \emph{$ \alpha $-balanced} if,
	\[ \dfrac{\overline{I}}{I^*}=\dfrac{n/s}{I^*}\geq\alpha. \]
\end{definition}
This definition formally captures the idea of `reasonable distribution' of the trees' orders mentioned above. Thus Theorem~\ref{thm: forests} implies the following bound on the approximation ratio.
	\begin{corollary}
		In the family of $ \alpha $-balanced forests,
		\[ R_{M_{2p}}\leq 1/0.09\alpha. \]
	\end{corollary}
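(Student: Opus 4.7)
The plan is to exploit the forest structure: in a forest, the random path from a vertex $x$ is deterministic --- it is the unique path from $x$ up to the root of its tree. Hence two random paths in a single round of the while-loop intersect iff their starting vertices $x_t, y_t$ lie in the same tree $T_i$, and then the first intersection is the lowest common ancestor of $x_t, y_t$ in $T_i$. I will lower-bound $\EE[\cl{M}_{2p}(G)]$ by isolating, for each tree $T_i$ and each round $t$, the ``clean'' event that $T_i$ has been untouched up to round $t$ and that round $t$ selects a vertex inside $T_i$.

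Formally, call round $t$ \emph{$T_i$-clean} if both $x_t, y_t$ lie in $T_i$ and no vertex of $T_i$ was marked in any previous round. On a $T_i$-clean round the LCA is unmarked and is therefore returned; moreover, conditional on such a round $(x_t, y_t)$ is uniform on $T_i \times T_i$ by independence of the rounds, so Theorem~\ref{thm: tree} applied to $T_i$ guarantees that the expected influence of the returned vertex is at least $\tfrac{2}{3}\,n_i$. Writing $p_i = n_i/n$ and $q = \sum_j p_j^2$, the probability that any single round has its two starting vertices in distinct trees neither of which is $T_i$ is $\alpha_i = (1-p_i)^2 - (q - p_i^2)$. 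A useful structural observation is that during any such prefix of rounds no vertex of $T_i$ gets marked, so the nonempty set $T_i$ keeps the condition $U \ne V$ satisfied and prevents the while-loop from terminating at the top. By round-independence, the probability that round $t$ is $T_i$-clean is exactly $\alpha_i^{t-1} p_i^2$; summing the geometric series with $1 - \alpha_i = 2 p_i(1-p_i) + q$ yields
\[
\EE[\cl{M}_{2p}(G)] \;\geq\; \frac{2n}{3} \sum_{i=1}^{s} \frac{p_i^{3}}{2 p_i(1-p_i) + q}.
\]

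The remaining step --- and what I expect to be the main obstacle --- is the variational lower bound showing that this sum is at least $0.135/s$ (equivalently, that the whole right-hand side is at least $0.09\, n/s$) for every probability vector $(p_1,\dots,p_s)$ on the $s$-simplex. I plan to attack this via a convexity / Lagrange-multiplier argument: the natural candidate for the worst case is the uniform distribution $p_i = 1/s$, where $q = 1/s$, $1-\alpha_i = (3s-2)/s^2$, and the bound evaluates to $2n/(3(3s-2))$, safely above $0.09\, n/s$. To cover general distributions I would crudely bound the denominators by $2p_i(1-p_i)+q \leq \tfrac{1}{2}+q$, invoke the power-mean inequality $\sum n_i^3 \geq n^3/s^2$ (from convexity of $x \mapsto x^3$), and split according to whether $q$ is small (a few dominant trees) or large (balanced sizes), extracting the stated constant $0.09$ in each regime.
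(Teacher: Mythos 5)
Your reduction to the inequality
\[
\EE[\cl{M}_{2p}(G)]\;\geq\;\frac{2n}{3}\sum_{i=1}^{s}\frac{p_i^{3}}{2p_i(1-p_i)+q}
\]
is correct, and it is essentially the paper's own proof of Theorem~\ref{thm: forests}: your ``$T_i$-clean'' events are the events $A_r$ there, the conditional appeal to Theorem~\ref{thm: tree} is identical, and your exact value $1-\alpha_i=2p_i(1-p_i)+q$ is a slightly sharper version of the paper's bound $1-q_r\leq Z+2I(r)/n$, so your sum dominates the one in~(\ref{eq: sum forests}). (Do also record the one-line passage from $\EE[\cl{M}_{2p}]\geq 0.09\,n/s$ to the stated corollary via Definition~\ref{dfn: balance}: $n/s\geq\alpha I^*$.)

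The genuine gap is in the final optimization step, and the tools you name do not close it. Bounding the denominators by $2p_i(1-p_i)+q\leq\tfrac12+q$ and using $\sum_i p_i^3\geq 1/s^2$ gives only $\frac{1/s^2}{1/2+q}=O(1/s^2)$, short of the required $0.135/s$ by a factor of order $s$ --- already in your own ``easy'' uniform case $p_i=1/s$. The bound $2p_i(1-p_i)\leq 1/2$ is hopelessly lossy when the $p_i$ are small; you must keep $2p_i(1-p_i)+q\leq 2p_i+q$ and compare $p_i$ with $q$ \emph{per index}, not split globally on $q$ (note also that $q=\sum_j p_j^2$ is \emph{large} when a few trees dominate and \emph{small}, $\approx 1/s$, when sizes are balanced --- you have this reversed). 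The paper's completion: set $S_1=\{i:q\leq p_i\}$, $S_2=S\setminus S_1$; bound each term by $p_i^2/3$ on $S_1$ and by $p_i^3/(3q)$ on $S_2$; apply the power-mean inequality $\sum_{S_2}p_i^3\geq(\sum_{S_2}p_i^2)^{3/2}/\sqrt{s}$; since $\sum_{S_1}p_i^2+\sum_{S_2}p_i^2=q$, at least one of $\sum_{S_1}p_i^2\geq\delta q$ or $\sum_{S_2}p_i^2\geq(1-\delta)q$ holds, where $(1-\delta)^{3/2}=\delta$ (so $\delta\approx0.43$); either branch yields at least $\delta q/3$ or $\delta\sqrt{q}/(3\sqrt{s})$, and $q\geq 1/s$ gives the constant $\tfrac{2}{3}\cdot\tfrac{\delta}{3}\approx0.095>0.09$. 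Without this (or an equivalent) case analysis, your argument is not complete.
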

 Note that our purpose here is to show that the approximation ratio can be finitely-bounded using a natural parameter of the graph. Although our \emph{theoretical} bound of $ 1/\alpha $ might be quite high, our simulations in Section~\ref{sec: tests} demonstrate that the actual approximation ratio in natural classes of networks is usually low.\\
 Before turning to the proof, we show in the next example that indeed when $ \alpha\to 0 $ the approximation ratio of $ \cl{M}_{2p} $ cannot be bounded.
\begin{example}\label{ex:forest}
	
	Consider a forest made of one star of order $ \sqrt{n} $ and $ n $ isolated vertices. \\
	
	\begin{tikzpicture}[scale=1, line width=0.4mm, every node/.style={draw,circle,inner sep=0pt, minimum size=20pt}]

	\node[](x) at (30pt,0pt){$ x $};
	\node[](v1) at (0,-30pt){$ 1 $};
	\node[](v2) at (25pt,-30pt){$ 2 $};
	\node[](vm) at (60pt,-30pt){$ \sqrt{n} $};
	
	\node[](u1) at (100pt,-30pt){$ 1 $};
	\node[](u2) at (125pt,-30pt){$ 2 $};
	\node[](un) at (160pt,-30pt){$ n $};
	
	\draw[->] (v1) -- (x);
	\draw[->] (v2) -- (x);
	\draw[->] (vm) -- (x);
	\draw[dotted] (38pt,-30pt) -- (47pt,-30pt);
	\draw[dotted] (138pt,-30pt) -- (147pt,-30pt);
	
	\end{tikzpicture}
	
	\noindent The centre of the star, $ x $, will only be selected if both paths hit the star for the first time on the same stage. With high probability, this event will not happen\footnote{At the first round the probability that $ x $ will be selected is $ O(n^{-1}) $, whereas the probability that $ x $ will be marked is $ O(n^{-1/2}) $. Same is true for all the first $ n^{3/4}  $ rounds. Therefore, the probability that $ x $ will be selected during at the first $ n^{3/4} $ rounds is $ o(n^{-1/4}) $, while the probability that it will be marked is close to 1.}; hence, the result will be either $ \emptyset $ or a vertex with influence one. 
\end{example}

	\begin{proof}[Proof of Theorem~\ref{thm: forests}]
		For any $ r\in S $, we denote by $ T_r $ the tree of $ r $. We define $ A_r $ to be the event that the mechanism  has picked a vertex from $ T_r $ without marking $ r $ before. That is, $ A_r $ is the event that the first time the two paths meet, they meet in $ T_r $, and none of the paths in previous stages was in $ T_r $. From Theorem~\ref{thm: tree} we know that
		\[ \EE[\cl{M}_{2p}|A_r]\geq \dfrac{2}{3}I(r). \]
		Denote the probability that the two paths intersect in a single stage by $ Z=\Pr(P_1\cap P_2\neq\emptyset)=\sum_{r\in S}(I(r)/n)^2 $. Let $ q_r $ be the probability that in a single stage, the two paths did not intersect and none of them was in $ T_r $. Then,
		\[ q_r\geq (1-Z)(1-2I(r)/n)\geq 1-Z-2I(r)/n, \]
		and,
		\[ \Pr(A_r)=\sum_{k=0}^\infty q_r^k\left (\dfrac{I(r)}{n}\right )^2=\dfrac{I^2(r)}{n^2(1-q_r)}\geq\dfrac{I^2(r)}{n^2Z+2nI(r)}. \]
		Now, since the events $ \{A_r\}_{r\in S} $ are pairwise disjoint, we may bound
		\begin{flalign}
		\EE[\cl{M}_{2p}]\geq\sum_{r\in S}\Pr(A_r)\EE[\cl{M}_{2p}|A_r]\geq \dfrac{2}{3}\sum_{r\in S}\dfrac{I^3(r)}{n^2Z+2nI(r)}.\label{eq: sum forests}
		\end{flalign}
		In order to bound the last sum, we partition $ S $ to two parts:
		\[ S_1=\{r\in S: Z\leq I(r)/n \};\; S_2=S\backslash S_1, \]
		and bound this sum for each part separately.
		\begin{flalign*}
		&\dfrac{2}{3}\sum_{r\in S_1}\dfrac{I^3(r)}{n^2Z+2nI(r)}\geq\dfrac{2}{3}\sum_{r\in S_1}\dfrac{I^3(r)}{n^2I(r)/n+2nI(r)}=\dfrac{2n}{9}\sum_{r\in S_1}\left(\dfrac{I(r)}{n}\right)^2;\\
		&\dfrac{2}{3}\sum_{r\in S_2}\dfrac{I^3(r)}{n^2Z+2nI(r)}\geq\dfrac{2}{3}\sum_{r\in S_2}\dfrac{I^3(r)}{n^2Z+2n^2Z}\geq\dfrac{2n}{9Z}\sum_{r\in S_2}\left (\dfrac{I(r)}{n}\right )^3\\
		&\geq \dfrac{2n}{9Z\sqrt{s}}\left(\sum_{r\in S_2}\left (\dfrac{I(r)}{n}\right )^2\right)^{3/2}.
		\end{flalign*}
		Where the last inequality is due to the convexity of $ x^{3/2} $. Let $ \delta $ be the solution of $ (1-\delta)^{3/2}=\delta $. Since $ Z=\sum_{r\in S}(I(r)/n)^2 $, either $ \sum_{r\in S_1}(I(r)/n)^2\geq\delta Z $ or $ \sum_{r\in S_2}(I(r)/n)^2\geq(1-\delta) Z $. In the former case we bound
		\[ 	\EE[\cl{M}_{2p}]\geq\dfrac{2n}{9}\sum_{r\in S_1}\left(\dfrac{I(r)}{n}\right)^2\geq \dfrac{2n\delta Z}{9}, \]
		and in the latter case we bound
		\[ \EE[\cl{M}_{2p}]\geq\dfrac{2n}{9Z\sqrt{s}}\left(\sum_{r\in S_2}\left (\dfrac{I(r)}{n}\right )^2\right)^{3/2}\geq \dfrac{2n((1-\delta)Z)^{3/2}}{9Z\sqrt{s}}=\dfrac{2n\delta \sqrt{Z}}{9\sqrt{s}}. \]
		Since $ \sum_{r\in S}I(r)=n $, we can use convexity  to bound $ Z $,
		\[ Z=\sum_{r\in S}\left(\dfrac{I(r)}{n}\right)^2\geq \dfrac{1}{s}\left (\sum_{r\in S}\dfrac{I(r)}{n} \right )^2=\dfrac{1}{s}.\]
		Hence, we get that in any case,

		\[
		\EE[\cl{M}_{2p}]\geq\dfrac{2\delta n }{9s}\approx \dfrac{2\cdot 0.43\cdot n}{9s} \geq 0.09\dfrac{n}{s}.\qedhere\]
		
	\end{proof}

\section{Mechanism for monotone DAGs}\label{sec: analytic 2p}
We remind that we define a monotone graph to be a graph in which any user is more influential than his followers. Clearly, a monotone graph must be acyclic. Monotonicity is a natural property in domains where the statement "my friend is more influential than I am" is true for any vertex. \\
Let $ G $ be a DAG. Denote by $ S\subseteq V(G) $ the set of \emph{sinks} of $ G $, i.e. the set of vertices with no out-edges. Denote $ s=|S| $. In a forest, $ S $ is the set of roots and, as we proved in Theorem~\ref{thm: forests}, $ \EE[\cl{M}_{2p}(G)]\geq cn/s $ for a constant $ c>0 $. The next example shows that this claim is not true for all monotone DAGs. We will show later in this section a mechanism which is somewhat related to $ \cl{M}_{2p} $, and which generalizes Theorem~\ref{thm: forests} for monotone DAGs.
\begin{example}\label{ex:dag}
	Consider a matrix of $ n^{3/4}\times n^{1/4} $ vertices and another vertex, $ v_0 $. Suppose each vertex in row $ 1\leq i\leq n^{3/4}-1 $ has an edge to every vertex in row $ i+1 $, and the vertices of row $ n^{3/4} $ all have a single edge to vertex $ v_0 $. 

	\begin{center}
	\begin{tikzpicture}[scale=1, line width=0.1mm, every node/.style={draw,circle,inner sep=0pt, minimum size=30pt}]

	\node[](v0) at (45pt,0pt){$ v_0 $};
	\node[](v1) at (0,-35pt){$ n^{3/4},1 $};
	\node[](v2) at (40pt,-35pt){$ n^{3/4},2 $};
	\node[](vm) at (90pt,-35pt){\tiny $ n^{3/4},n^{1/4} $};
	
	\node[](v12) at (0,-80pt){$ 2,1 $};
	\node[](v22) at (40pt,-80pt){$ 2,2 $};
	\node[](vm2) at (90pt,-80pt){$ 2,n^{1/4} $};
	
	\node[](v11) at (0,-115pt){$ 1,1 $};
	\node[](v21) at (40pt,-115pt){$ 1,2 $};
	\node[](vm1) at (90pt,-115pt){$ 1,n^{1/4} $};

	\begin{scope}[on background layer]
	\draw[->] (v1) -- (v0);
	\draw[->] (v2) -- (v0);
	\draw[->] (vm) -- (v0);
	
	\draw[->] (v11) -- (v12);
	\draw[->] (v11) -- (v22);
	\draw[->] (v11) -- (vm2);
	\draw[->] (v21) -- (v12);
	\draw[->] (v21) -- (v22);
	\draw[->] (v21) -- (vm2);
	\draw[->] (vm1) -- (v12);
	\draw[->] (vm1) -- (v22);
	\draw[->] (vm1) -- (vm2);
	
	\draw[->] (v12) -- (5pt,-60pt);
	\draw[->] (v12) -- (40pt,-60pt);
	\draw[->] (v12) -- (90pt,-60pt);
	\draw[->] (v22) -- (5pt,-60pt);
	\draw[->] (v22) -- (40pt,-60pt);
	\draw[->] (v22) -- (90pt,-60pt);
	\draw[->] (vm2) -- (5pt,-60pt);
	\draw[->] (vm2) -- (40pt,-60pt);
	\draw[->] (vm2) -- (90pt,-60pt);

	\draw[dotted] (v2) -- (vm);
	\draw[dotted] (v22) -- (vm2);
	\draw[dotted] (v21) -- (vm1);
	\draw[dotted] (v12) -- (v1);
	\end{scope}
	\end{tikzpicture}
	\end{center}
	
	\noindent The vertices of row $ i $ all have influence $ i $, and vertex $ v_0 $ has influence $ n+1 $. Thus, this graph is monotone. Since $ v_0 $ is the only sink, $ \overline{I}=(n+1)/s=n+1=I^* $. However, $ \Pr(\cl{M}_{2p}(v_0))\xrightarrow[]{n\to\infty} 0 $. Indeed, with high probability, both $ x,y $ will be somewhere in the first $ n^{3/4}-n^{1/2} $ rows of the matrix. Since, for each of the top $ n^{1/2} $ rows the random paths $ P_1, P_2 $ have an independent probability of $ n^{-1/4} $ to intersect, we get that with high probability the two paths will intersect before reaching $ v_0 $. We therefore get that for this monotone DAG, the Two Path mechanism does not have a bounded approximation ratio.
\end{example}

The mechanism which we are about to suggest for monotone DAGs, will not be described as an algorithmic procedure, but rather as an explicit distribution formula. We obtain this formula by first finding an explicit expression for the distribution of $ \cl{M}_{2p} $, and then generalizing it in a natural way to monotone DAGs. We start by finding the distribution of $ \cl{M}_{2p} $ when $ G $ is a tree. In this case, the probability of two independent random paths to intersect in $ v\in V(G) $ is precisely $ (I(v)/n)^2 $. However, this is not the probability $ \Pr(\cl{M}_{2p}=v) $, because $ v $ is only selected if it is the \emph{first} intersection of the two paths. Define the recursive function:
\begin{flalign}
Z(v)=Z(v,G):=\left(\dfrac{I(v)}{n}\right)^2-\sum_{u\in P(v)}Z(u), \label{eq: Z} 
\end{flalign}
where $ P(v)\subseteq V $ is the progeny set of $ v $, i.e. all vertices which have a path to $ v $ (not including $ v $ itself). We can then write
\[ \Pr(\cl{M}_{2p}=v)=Z(v), \]
and we have found an explicit expression for the distribution induced by $ \cl{M}_{2p} $. We remark that it is not hard to prove, using simple induction, that
\begin{flalign}
Z(v)=\left(\dfrac{I(v)}{n}\right)^2-\sum_{u\in N(v)}\left(\dfrac{I(u)}{n}\right)^2.  \label{eq: Z(v)}
\end{flalign}

\noindent Now suppose that $ G $ is a forest. Observe the following:
\begin{itemize}
	\item The probability that $ v\in V $ is selected in the first stage is $ Z(v) $.
	\item In every subsequent stage, $ v $ will have probability $ Z(v) $ to be selected, provided the two paths did not intersect in previous stages and $ v $ was not marked.
\end{itemize}
Let $ Z=Z(G):=\sum_{u\in V}Z(u)=\sum_{r\in S(G)}(I(r)/n)^2$ denote the probability that in a single stage the two paths intersect. Let $ G_v $ be the graph we get from $ G $ after removing all the out-edges of $ v $. Then $ Z_v=Z(G_v) $ is the probability that in a single stage the Two Path intersected but none of them went through $ v $, unless $ v $ is the intersection vertex (this is because $ v $ is a root vertex in $ G_v $). Thus, if we denote by $ q(v) $
the probability that in a single stage the two paths did not intersect and none of them went through $ v $, then
\begin{flalign*}
q(v)\geq(1-Z_v)(1-2I(v)/n).
\end{flalign*}
We conclude that 
\begin{flalign}
\Pr(\cl{M}_{2p}=v)=\sum_{k=1}^\infty q(v)^kZ(v)=\dfrac{Z(v)}{1-q(v)}\geq \dfrac{Z(v)}{Z_v+2\tfrac{I(v)}{n}}. \label{eq: forests dist}
\end{flalign}
We will now use this last expression as a baseline for our `analytic' Two Path mechanism, denoted $ \cl{M}_{2p}^A $.

Let $ G $ be a DAG. In a forest the influence of one vertex over the other, $ I(v,u) $, is either 1 or 0. This is no longer the case for DAGs. To account for this difference we alter the function we defined at~(\ref{eq: Z}):
\begin{flalign*}
&Z(v)=Z(v,G)=\left(\dfrac{I(v)}{n}\right)^2-\sum_{u\in N(v)}I(v,u)\left(\dfrac{I(u)}{n}\right)^2;\\
&Z=Z(G)=\sum_{u\in V}Z(u)=\sum_{r\in S(G)}\left(\dfrac{I(r)}{n}\right)^2.
\end{flalign*}
Now, our mechanism for monotone DAGs, denoted $ \cl{M}_{2p}^A $, is defined by the following distribution:
\begin{flalign}
\Pr(\cl{M}_{2p}^A(G)=v)=\dfrac{Z(v)}{Z_v+2\tfrac{I(v)}{n}}, \label{eq: monotone mechanism}
\end{flalign}
when $ G $ is a monotone DAG; if $ G $ is not a monotone DAG, the mechanism returns $ \emptyset $.

\begin{proposition}\label{prp: monotone}
Mechanism $ \cl{M}_{2p}^A$ is well-defined and incentive compatible in the family of monotone DAGs.
\end{proposition}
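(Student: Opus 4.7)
The proof has two halves: incentive compatibility and well-definedness. I would dispose of IC first, since it admits a clean invariance argument, and then turn to well-definedness.

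For IC, the key observation is that every quantity in the ratio $Z(v)/(Z_v + 2I(v)/n)$ is invariant under $v$ changing its own out-edges. The influence $I(v)$ is a sum over paths \emph{ending} at $v$, and the product $\prod 1/d_o(i)$ never involves $d_o(v)$; for $u \in N(v)$, the edge $(u,v)$ combined with acyclicity prevents $v$ from lying on any path into $u$, so $I(u)$ and $I(v,u)$ are likewise independent of $v$'s out-edges; and $G_v$ is by definition the graph obtained by deleting $v$'s out-edges, so $Z_v$ is manifestly invariant. Hence $\Pr(\cl{M}_{2p}^A(G)=v)$ is constant over $v$'s unilateral deviations inside the monotone-DAG class; deviations that take the graph outside the class produce $\emptyset$ and thus assign probability zero to $v$, which is never better than the truthful outcome.

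For well-definedness, the denominator $Z_v + 2I(v)/n$ is trivially positive, so non-negativity reduces to $Z(v) \geq 0$, i.e.\ $I(v)^2 \geq \sum_{u \in N(v)} I(v,u)\, I(u)^2$. I would approach this probabilistically using two independent random walks $W_1,W_2$ from uniformly chosen start vertices: then $(I(v)/n)^2 = \Pr(W_1,W_2 \text{ both reach } v)$, while by the Markov property of $W_1$ the sum rewrites as $(I(v)/n^2)\,\EE\bigl[\sum_{u \in N(v)\cap W_1} I(u) \bigm| W_1 \text{ reaches } v\bigr]$. The needed inequality then becomes $\EE\bigl[\sum_{u \in N(v)\cap W_1} I(u) \bigm| W_1 \text{ reaches } v\bigr] \leq I(v)$, which I would derive from monotonicity ($I(u) < I(v)$ for $u \in N(v)$) by book-keeping $N(v)$-vertices along the topologically-ordered walk into $v$.

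The second ingredient of well-definedness is $\sum_v Z(v)/(Z_v + 2I(v)/n) \leq 1$, which I expect to be the main obstacle. In the forest case it was automatic because the formula was derived (see~(\ref{eq: forests dist})) as a lower bound on the true selection probability $\Pr(\cl{M}_{2p}(G)=v)$, and those probabilities sum to at most one. For monotone DAGs one no longer has $\sum_v Z(v) = Z$ --- a simple diamond with two parallel paths from $a$ to $b$ already produces strict inequality --- so a fresh argument is required. My plan would be to recover the bound pointwise on monotone DAGs: writing $\Pr(\cl{M}_{2p}(G)=v) = Z'(v)/(1-q(v))$, where $Z'(v)$ is the probability that $v$ is the first intersection in a single round and $q(v)$ the probability that a round produces neither an intersection nor a visit to $v$, the task reduces to $Z(v)(1-q(v)) \leq Z'(v)(Z_v + 2I(v)/n)$. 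Relating the ``generalized'' $Z(v)$ to the literal first-intersection probability $Z'(v)$ on monotone DAGs --- either by inclusion--exclusion over earlier common ancestors of $v$, or by a coupling along a topological ordering of the two walks --- is the delicate step, which I expect to demand the most care.
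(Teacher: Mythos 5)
Your treatment of incentive compatibility coincides with the paper's: the defining formula for $\Pr(\cl{M}_{2p}^A(G)=v)$ involves only $I(v)$, the quantities $I(v,u), I(u)$ for in-neighbours $u$, and the graph $G_v$ in which $v$'s out-edges are deleted anyway, so it is invariant under $v$'s deviations. Your probabilistic reformulation of $Z(v)\geq 0$ is a legitimate variant of the paper's monotonicity argument, but the ``book-keeping'' you defer is where all the content lies: a single walk into $v$ can pass through \emph{several} in-neighbours of $v$ (e.g.\ $a\to b$, $a\to v$, $b\to v$), so $\sum_{u\in N(v)\cap W_1}I(u)$ is a sum of several terms each below $I(v)$, and the bound $\EE\bigl[\sum_{u\in N(v)\cap W_1}I(u)\bigm| W_1\ni v\bigr]\leq I(v)$ does not follow termwise from $I(u)<I(v)$. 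As it stands this half is a plan, not a proof.

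The serious gap is your strategy for $\sum_v\Pr(\cl{M}_{2p}^A(G)=v)\leq 1$. The pointwise domination $\Pr(\cl{M}_{2p}^A(G)=v)\leq\Pr(\cl{M}_{2p}(G)=v)$ that you aim for is \emph{false} on monotone DAGs, and the paper's own Example~\ref{ex:dag} refutes it: for that layered monotone graph with unique sink $v_0$ one computes $Z(v_0)=1-n^{-1/4}$, $Z_{v_0}=Z=1$ and $I(v_0)/n\to 1$, so $\Pr(\cl{M}_{2p}^A=v_0)\to 1/3$, while $\Pr(\cl{M}_{2p}=v_0)\to 0$ because the two walks collide in one of the top $n^{1/2}$ layers with probability $1-(1-n^{-1/4})^{n^{1/2}}\to 1$. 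The analytic mechanism is introduced precisely because it assigns \emph{more} probability than $\cl{M}_{2p}$ to such vertices, so no coupling or inclusion--exclusion can establish $Z(v)(1-q(v))\leq Z'(v)(Z_v+2I(v)/n)$; the route should be abandoned. The paper instead proves the sum bound without any reference to the algorithmic mechanism, via Lemma~\ref{lem: DAG}: writing the sinks of $G_v$ as the sinks of $G$ (each with influence reduced by $I(r,v)I(v)$) together with the new sink $v$, and using monotonicity and Cauchy--Schwarz, one gets $Z-2I(v)/n\leq Z_v\leq Z$. Hence every denominator $Z_v+2I(v)/n$ is at least $Z$, and since $\sum_{v\in V}Z(v)\leq\sum_{r\in S}(I(r)/n)^2=Z$, the probabilities sum to at most $1$. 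Some such direct lower bound on $Z_v$ is the missing ingredient in your argument.
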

We will need the following lemma.
\begin{lemma}\label{lem: DAG}
	For any monotone DAG $ G $ and for every $ v\in V(G) $, 
	\[ Z-2\tfrac{I(v)}{n}\leq Z_v\leq Z .\]
\end{lemma}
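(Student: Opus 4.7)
My plan is to derive explicit formulas for the influence values in $G_v$ in terms of those in $G$, substitute them into the definition of $Z_v$, and then bound the resulting expression in both directions.

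If $v\in S(G)$ the statement is trivial since $G_v=G$, so I may assume $v$ has an out-edge and $S(G_v)=S(G)\cup\{v\}$. The key reduction is the identity
\[ I^{G_v}(u)=I^G(u)-I^G(v)\cdot I^G(u,v) \quad (u\ne v), \qquad I^{G_v}(v)=I^G(v), \]
which follows because in a DAG every simple path from $y$ to $u$ through $v$ decomposes uniquely into a $y\to v$ subpath followed by a $v\to u$ subpath (shared intermediate vertices would create a directed cycle). The weight of all paths through $v$ therefore equals $I^G(v,y)\cdot I^G(u,v)$, and summing over $y$ yields the identity.

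Substituting into $Z_v=(I(v)/n)^{2}+\sum_{r\in S(G)}(I^{G_v}(r)/n)^{2}$ and expanding produces
\[ Z_v-Z=\Bigl(\tfrac{I(v)}{n}\Bigr)^{2}-\tfrac{2I(v)}{n^{2}}\sum_{r\in S(G)}I(r)\,I(r,v)+\tfrac{I(v)^{2}}{n^{2}}\sum_{r\in S(G)}I(r,v)^{2}. \]
Throughout I use two basic facts: since a random walk from a non-sink vertex in a DAG terminates at some sink, $\sum_{r\in S(G)}I(r,v)=1$, and since each $I(r,v)\in[0,1]$, also $\sum_{r}I(r,v)^{2}\le 1$. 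For the upper bound $Z_v\le Z$, bounding the third term by $(I(v)/n)^{2}$ reduces the claim to $\sum_{r}I(r)\,I(r,v)\ge I(v)$. This is where \emph{monotonicity} enters: for every sink $r$ with $I(r,v)>0$, walking a $v\to r$ path and applying $I(x)<I(y)$ along each edge gives $I(r)>I(v)$, whence $\sum_{r}I(r)I(r,v)>I(v)\sum_{r}I(r,v)=I(v)$, as required.

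The lower bound $Z-Z_v\le 2I(v)/n$ is easier: since the final term in the expansion is nonnegative, one has $Z-Z_v\le (2I(v)/n^{2})\sum_{r}I(r)\,I(r,v)$, and then the crude estimate $I(r)=\sum_{y}I(r,y)\le n$ together with $\sum_{r}I(r,v)=1$ yields $Z-Z_v\le 2I(v)/n$. Note that this direction uses only the DAG structure and not monotonicity. The most delicate step will be the $I^{G_v}$ identity and the signed bookkeeping in the expansion; the role of monotonicity, which appears only in the upper bound, is then relatively clean.
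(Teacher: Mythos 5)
Your proposal is correct and follows essentially the same route as the paper: both expand $Z_v-Z$ via the identity $I^{G_v}(r)=I(r)-I(r,v)I(v)$ for sinks $r$ of $G$ (plus the new sink $v$), then use $\sum_{r}I(r,v)=1$, $\sum_{r}I^2(r,v)\le 1$, and monotonicity in the form $\sum_{r}I(r,v)I(r)\ge I(v)$ for the upper bound. The only cosmetic difference is in the lower bound, where you use the crude estimate $I(r)\le n$ to get $\sum_{r}I(r,v)I(r)/n\le 1$ while the paper derives the same inequality by Cauchy--Schwarz; your explicit justification of the path-decomposition identity is a welcome addition that the paper leaves implicit.
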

\begin{proof}
	If $ v $ is a sink then $ Z_v=Z $. Assume then that $ v $ is not a sink. On the one hand, the influence of some of the sinks of $ G $ is lower in $ G_v $, but on the other hand $ v $ is an extra sink which was not in $ G $. Thus,
	\begin{flalign*}
	Z_v&=\sum_{r\in S(G_v)}\left(\dfrac{I(r)|_{G_v}}{n}\right)^2\\
	&=Z-\sum_{r\in S(G)}\left[\left(\dfrac{I(r)}{n}\right)^2-\left(\dfrac{I(r)-I(r,v)I(v)}{n}\right)^2  \right] +\left(\dfrac{I(v)}{n}\right)^2\\
	&=Z-\sum_{r\in S(G)}\dfrac{2I(r)I(r,v)I(v)-I^2(r,v)I^2(v)}{n^2} +\left(\dfrac{I(v)}{n}\right)^2\\
	&=Z-2\dfrac{I(v)}{n}\left(\sum_{r\in S(G)}I(r,v)\dfrac{I(r)}{n}-\dfrac{1}{2}\cdot \dfrac{I(v)}{n}\left (1+\sum_{r\in S(G)}I^2(r,v) \right ) \right).	
	\end{flalign*}
	Now, $\sum_{r\in S(G)}I(r,v)=1\Longrightarrow \sum_{r\in S(G)}I^2(r,v)\leq 1 $, and by monotonicity $ \sum_{r\in S(G)}I(r,v)I(r)\geq I(v)\sum_{r\in S(G)}I(r,v)= I(v) $. Hence, we have,
	\begin{flalign*}
	&\sum_{r\in S(G)}I(r,v)\dfrac{I(r)}{n}-\dfrac{1}{2}\cdot \dfrac{I(v)}{n}\left (1+\sum_{r\in S(G)}I^2(r,v) \right )
	\geq 0,
	\end{flalign*}
	and the upper bound follows. \\
	For the lower bound it is enough to observe that by Cauchy-Schwarz, 
	\[ \sum_{r\in S(G)}I(r,v)\dfrac{I(r)}{n}\leq \sqrt{\sum_{r\in S(G)}I^2(r,v)}\sqrt{\sum_{r\in S(G)}I^2(r)/n^2}\leq 1.\qedhere \]
\end{proof}

\begin{proof}[Proof of Proposition~\ref{prp: monotone}]
	Since it is clear from (\ref{eq: monotone mechanism}) that $ \Pr(\cl{M}_{2p}^A=v) $ does not depend on the out-edges of $ v $, the mechanism is IC.\\
	To prove that it is well-defined, we need to show that for every monotone DAG, all the probabilities are non-negative and the sum of probabilities is at most 1 \footnote{If the sum is strictly less than 1, then the rest of the probability goes to $ \emptyset $.}. For the former, we use the monotonicity assumption, which means that $ I(v)>I(u) $ for all $ u\in N(v) $. Hence,
	\begin{flalign*}
	&Z(v)=\left(\dfrac{I(v)}{n}\right)^2-\sum_{u\in N(v)}I(v,u)\left(\dfrac{I(u)}{n}\right)^2\\
	&>\left(\dfrac{I(v)}{n}\right)^2-\dfrac{I(v)}{n}\sum_{u\in N(v)}I(v,u)\dfrac{I(u)}{n}=\left(\dfrac{I(v)}{n}\right)^2-\dfrac{I(v)(I(v)-1)}{n}>0.
	\end{flalign*}
	For the latter, we use the lemma.
	\begin{flalign*}
	\sum_{v\in V}\Pr(\cl{M}_{2p}^A(G)=v)=\sum_{v\in V}\dfrac{Z(v)}{Z_v+2\tfrac{I(v)}{n}}\leq \dfrac{\sum_{v\in V}Z(v)}{Z}=1.\qedhere
	\end{flalign*}	
	
\end{proof}
We are ready to prove the parallel of Theorem~\ref{thm: forests}.
\begin{theorem}\label{thm: DAG}
	Let $ G $ be a monotone DAG with $ s $ sinks, $ S\subseteq V(G) $. Then,
	\begin{flalign*}
	\EE[\cl{M}_{2p}^A(G)]\geq 0.07\dfrac{n}{s}.
	\end{flalign*}
	
\end{theorem}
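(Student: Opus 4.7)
The plan is to follow the three-stage structure of the proof of Theorem~\ref{thm: forests}: reduce to a tractable sum, split the ground set by the relative sizes of $I(v)/n$ and $Z$, and close with a convexity argument that mirrors the forest split $S = S_1 \cup S_2$.

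First, by Lemma~\ref{lem: DAG} we have $Z_v \leq Z$ for every $v \in V$, hence
\[
\EE[\cl{M}_{2p}^A(G)] = \sum_{v \in V} \frac{I(v) Z(v)}{Z_v + 2I(v)/n} \geq \sum_{v \in V} \frac{I(v) Z(v)}{Z + 2I(v)/n}.
\]
This is the DAG analogue of the estimate $\EE[\cl{M}_{2p}] \geq \tfrac{2}{3} \sum_{r \in S} I^3(r)/(n^2 Z + 2nI(r))$ from the forest proof, but now the sum extends over all vertices and $I(v)Z(v)$ replaces $I^3(v)/n^2$. Note that the case of sinks is particularly clean since $Z_r = Z$ exactly, but non-sinks must be kept in the sum (as the example in which $v_0$ is a sink whose only neighbour $u_1$ has many leaf followers shows: most of the $Z$-mass sits on non-sink vertices).

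Second, I would partition $V = V_1 \cup V_2$ with $V_1 = \{v : I(v)/n \geq Z\}$ and $V_2 = V \setminus V_1$. Using $Z + 2I(v)/n \leq 3 I(v)/n$ on $V_1$ and $Z + 2I(v)/n \leq 3Z$ on $V_2$ yields
\[
\EE[\cl{M}_{2p}^A(G)] \geq \frac{n}{3} \sum_{v \in V_1} Z(v) + \frac{1}{3Z} \sum_{v \in V_2} I(v) Z(v).
\]
Since $\sum_v Z(v) = Z$, letting $\delta$ be the root of $(1-\delta)^{3/2} = \delta$ (so $\delta \approx 0.43$), either $\sum_{V_1} Z(v) \geq \delta Z$ or $\sum_{V_2} Z(v) > (1-\delta) Z$. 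In the first case I immediately get $\EE[\cl{M}_{2p}^A(G)] \geq n \delta Z / 3 \geq n\delta/(3s)$, using $Z \geq 1/s$ (the same convexity bound used in the forest proof, which relies only on $\sum_{r \in S} I(r) = n$).

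Third, the second case is the main obstacle. The goal is to show $\sum_{V_2} I(v) Z(v) \gtrsim (1-\delta)^{3/2} \, nZ^2/\sqrt{s}$, so that division by $3Z$ again yields a bound of order $n/s$ after applying $Z \geq 1/s$. The hard part is that, unlike in the forest proof, we cannot directly invoke the tree-level estimate $\EE[\cl{M}_{2p}|A_r] \geq (2/3) I(r)$ from Theorem~\ref{thm: tree}; we must instead argue straight from the explicit formula. I plan to do this by combining (i) the inequality $Z(v) \leq (I(v)/n)^2$ (immediate from the definition of $Z(v)$) with (ii) monotonicity, which guarantees that any $v$ with $Z(v) > 0$ has $I(v)$ at least as large as its followers' influences. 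This lets one substitute $Z(v)$-weights for $(I(v)/n)^2$-weights inside a Jensen-style bound on $x \mapsto x^{3/2}$ restricted to $V_2$, exactly as in the $S_2$ step of the forest argument. The extra slack introduced by Lemma~\ref{lem: DAG} (which only guarantees $Z_v \in [Z - 2I(v)/n, Z]$ rather than $Z_v = Z$) and by this substitution step is what drops the final constant from $0.09$ to $0.07$.
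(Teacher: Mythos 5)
Your first two steps are sound, but the third stage --- the case $\sum_{v\in V_2}Z(v)\geq(1-\delta)Z$ --- has a genuine gap, and it is exactly where the paper's proof does something you have not replicated. Your plan is to run the power-mean (Jensen) step over the vertex set $V_2$: from $Z(v)\leq(I(v)/n)^2$ you get $I(v)Z(v)\geq nZ(v)^{3/2}$, and then $\sum_{V_2}Z(v)^{3/2}\geq|V_2|^{-1/2}\bigl(\sum_{V_2}Z(v)\bigr)^{3/2}$. The problem is that $|V_2|$ can be of order $n$, not $s$ (in a monotone DAG \emph{every} vertex has $Z(v)>0$, so the $Z$-mass is genuinely spread over all of $V$), whereas the forest proof applies this step to a sum over the $s$ roots only. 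Concretely, take $G=\cl{P}_n$ (a single path, $s=1$): there $Z=1$, $Z(v_i)=(2i-1)/n^2$, the root alone lands in $V_1$ with $Z(v_n)\approx 2/n<\delta Z$, so you are forced into the second case, and your chain of inequalities yields only $\frac{1}{3Z}\cdot\frac{n}{\sqrt{n}}\bigl((1-\delta)Z\bigr)^{3/2}=O(\sqrt{n})$, far short of the required $0.07\,n$. (Your stated target $\sum_{V_2}I(v)Z(v)\gtrsim(1-\delta)^{3/2}nZ^2/\sqrt{s}$ also does not close the argument even if proved: dividing by $3Z$ and using $Z\geq 1/s$ gives only $\Omega(n/s^{3/2})$.)

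The missing idea is an aggregation step that collapses the vertex-level sum back onto the $s$ sinks \emph{before} the $S_1/S_2$ dichotomy. The paper defines, for each sink $r$, the influence-weighted quantity $\EE_r=\sum_{v\in V}I(r,v)I(v)\Pr(\cl{M}_{2p}^A=v)$, notes that $\sum_{r\in S}I(r,v)=1$ gives $\EE[\cl{M}_{2p}^A]=\sum_{r\in S}\EE_r$, and proves by induction on the size of the progeny of $r$ (splitting $\EE_r=I(r)\Pr(\cl{M}_{2p}^A=r)+\sum_{v\in N(r)}I(r,v)\EE_v$ and using monotonicity to control the cross terms) that $\EE_r\geq\frac{1}{2}\cdot\frac{I^3(r)}{n^2Z+2nI(r)}$. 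This reproduces the forest sum~(\ref{eq: sum forests}) up to a factor $3/4$, over an index set of size $s$, after which the $S_1/S_2$ split and the $Z\geq 1/s$ convexity bound go through verbatim. (Incidentally, this is also the real source of the constant $0.07=\frac{3}{4}\cdot 0.09$: the induction yields $\frac{1}{2}$ in place of the tree bound $\frac{2}{3}$, rather than slack from Lemma~\ref{lem: DAG}.) Without some such per-sink aggregation, a direct vertex-level split cannot recover the $1/\sqrt{s}$ (rather than $1/\sqrt{n}$) loss in the power-mean step.
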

\begin{proof}
	Define, for every $ r\in V $,
	\[ \EE_r=\sum_{v\in V}I(r,v)I(v)\Pr(\cl{M}_{2p}^A=v). \]
	Since $ \forall v, \sum_{r\in S}I(r,v)=1 $,
	\begin{flalign*}
	\EE[\cl{M}_{2p}^A]&=\sum_{v\in V}I(v)\Pr(\cl{M}_{2p}^A=v)=\sum_{v\in V}I(v)\Pr(\cl{M}_{2p}^A=v)\sum_{r\in S}I(r,v)\\
	&= \sum_{r\in S}I(r,v)\sum_{v\in V}I(v)\Pr(\cl{M}_{2p}^A=v)= \sum_{r\in S}\EE_r. 
	\end{flalign*}
	It is enough then, to bound $ \EE_r $. Let $ P(r)\subseteq V $ be the progeny of $ r $. We will prove by induction on $ |P(r)| $ that 
	\[ \EE_r\geq \dfrac{1}{2}\cdot\dfrac{I^3(r)}{n^2Z+2nI(r)}. \]
	When $ P(r)=\emptyset $, $ I(r)=1 $ and by Lemma~\ref{lem: DAG},
	\begin{flalign*}
	\EE_r=\Pr(\cl{M}_{2p}^A=v)= \dfrac{1/n^2}{Z_v+2/n}\geq \dfrac{1}{n^2Z+2n}.
	\end{flalign*}
	In the general case, we write
	$ \EE_r=I(r)\Pr(\cl{M}_{2p}^A=r)+\sum_{v\in N(r)}I(r,v)\EE_v $.\\ Now,
	\begin{flalign*}
	I(r)\Pr(\cl{M}_{2p}^A=r)=\dfrac{I(r)Z(r)}{Z_r+2\tfrac{I(r)}{n}}\geq\dfrac{I^3(r)-I(r)\sum_{v\in N(r)}I(r,v)I^2(v)}{n^2Z+2nI(r)},
	\end{flalign*}
	while, by induction,
	\begin{flalign*}
	\sum_{v\in N(r)}I(r,v)\EE_v&\geq \dfrac{1}{2}\sum_{v\in N(r)}\dfrac{I(r,v)I^3(v)}{n^2Z_v+2nI(v)}\geq \dfrac{1}{2}\sum_{v\in N(r)}\dfrac{I(r,v)I^3(v)}{n^2Z+2nI(r)}.
	\end{flalign*}
	We get,
	\begin{flalign}
	\EE_r\geq \dfrac{I^3(r)}{n^2Z+2nI(r)}+\sum_{v\in N(r)}\dfrac{I(r,v)I(v)}{n^2Z+2nI(r)}\left[\dfrac{1}{2}I^2(v)-I(r)I(v)\right].
	\end{flalign}
	The term in the parenthesis is minimized when $ I(v)=I(r) $, and
	\begin{flalign*}
	\EE_r&\geq \dfrac{I^3(r)}{n^2Z+2nI(r)}-\dfrac{1}{2}\sum_{v\in N(r)}\dfrac{I(r,v)I(v)I^2(r)}{n^2Z+2nI(r)}\\
	&=\dfrac{I^3(r)}{n^2Z+2nI(r)}-\dfrac{1}{2}\cdot\dfrac{I^2(r)(I(r)-1)}{n^2Z+2nI(r)}\geq\dfrac{1}{2}\cdot\dfrac{I^3(r)}{n^2Z+2nI(r)}.
	\end{flalign*}
	We proved:
	\begin{flalign*}
	\EE[\cl{M}_{2p}^A]\geq \dfrac{1}{2}\sum_{r\in S}\dfrac{I^3(r)}{n^2Z+2nI(r)}.
	\end{flalign*}
	The last sum is precisely the sum we got in the proof of Theorem~\ref{thm: forests} in~(\ref{eq: sum forests}), multiplied by 3/4. We continue in exactly the same manner and get:	

	\[ \EE[\cl{M}_{2p}^A]\geq \dfrac{3}{4}\cdot\dfrac{2}{9}\cdot0.43\cdot \dfrac{n}{s}\geq 0.07\dfrac{n}{s}.\qedhere\]
\end{proof}

\section{Mechanism for general graphs}\label{sec: general graphs}
	Mechanism $ \cl{M}_{2p} $ is defined for all networks, but is incentive-compatible only on the family of DAGs (see Example~\ref{ex: general graphs}). We will now present mechanism $ \cl{M}_{2p}^G $, which is based on $ \cl{M}_{2p} $ and which is IC on all graphs\footnote{Notice, that in this case, reporting the true edges becomes a weakly dominant strategy for all vertices.}. The idea is to choose a random ordering of $ V(G) $, and remove all edges from vertices with high index to vertices with lower index. The resulted graph is clearly acyclic, and we return the outcome of the running of $ \cl{M}_{2p} $ on this graph.
	
	\begin{algorithmic}[1]
		\State {$\succ\gets $ random ordering of $ V(G) $}
		\ForAll {$ e=(x,y)\in E(G) $}
		\If {$ x\succ y $}
		\State {$ E\gets E-e $}
		\EndIf
		\EndFor\State{}
		\Return {$ \cl{M}_{2p}(G) $}
		\end{algorithmic}
		
		\begin{proposition}
			Mechanism $ \cl{M}_{2p}^G $ is IC for all graphs.
			\end{proposition}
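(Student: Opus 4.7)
The plan is to reduce incentive compatibility for arbitrary graphs to the already-established IC on DAGs (Proposition~\ref{pro:ic}) by conditioning on the random ordering $\succ$ chosen in line~1 of the algorithm and observing that only the pruned DAG is visible to $\cl{M}_{2p}$.

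First I would fix an arbitrary vertex $v$ and condition on the ordering $\succ$. Write $D(G,\succ)$ for the DAG obtained by deleting every edge $(x,y)$ with $x\succ y$. An edge $(v,u)$ that $v$ chooses to report survives the pruning step if and only if $v\prec u$, so if $v$ deviates from its true out-edge set $F(v)$ to some alternative set $F'(v)$, producing the graph $G'$, the pruned graph $D(G',\succ)$ differs from $D(G,\succ)$ \emph{only in the out-edges of $v$}; all other vertices' edges are identical, because the pruning depends solely on $\succ$ and on the (unchanged) reports of the other vertices.

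Next I would invoke Proposition~\ref{pro:ic}. Since $D(G,\succ)$ is a DAG and $D(G',\succ)$ is obtained from it by adding and/or removing outgoing edges of $v$ (all of the new edges pointing to vertices above $v$ in $\succ$, hence preserving acyclicity), $D(G',\succ)$ lies in the class $\cl{G}_v$ associated with $D(G,\succ)$. Therefore
\[
\Pr\bigl(\cl{M}_{2p}(D(G',\succ))=v\bigr)\le \Pr\bigl(\cl{M}_{2p}(D(G,\succ))=v\bigr).
\]
The ordering $\succ$ is sampled independently of any agent's report, so averaging over $\succ$ yields
\[
\Pr\bigl(\cl{M}_{2p}^G(G')=v\bigr)\le \Pr\bigl(\cl{M}_{2p}^G(G)=v\bigr),
\]
which is exactly the IC condition in the definition.

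The argument is essentially a one-step reduction once the commuting structure ``prune-then-run'' is noticed, so there is no substantial obstacle. The only point that warrants care is verifying that $v$'s deviations affect $D(\cdot,\succ)$ \emph{only} through $v$'s outgoing edges; this is where the asymmetry of the pruning rule (keeping $(v,u)$ iff $v\prec u$) is crucial, since otherwise a deviation by $v$ could accidentally alter an edge incident to a different vertex. The bonus observation noted in the paper's footnote---that truthful reporting is in fact a \emph{weakly dominant} strategy---falls out for free, because the displayed inequality holds pointwise in $\succ$ and for every fixed profile of reports by the remaining agents.
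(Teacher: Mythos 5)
Your proof is correct and follows essentially the same route as the paper's: condition on the ordering $\succ$, observe that a deviation by $v$ changes the pruned DAG $F(G,\succ)$ only through $v$'s surviving out-edges, invoke Proposition~\ref{pro:ic} on that DAG, and average over $\succ$. Your write-up is a bit more explicit than the paper's about why $D(G',\succ)\in\cl{G}_v$ relative to $D(G,\succ)$ and about the pointwise (hence weakly dominant) nature of the inequality, but the argument is the same.
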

			\begin{proof}
				For every graph $ G $ and for every ordering $ \succ $ on $ V(G) $, let $ F(G,\succ) $ be the resulted DAG after removing from $ G $ edges as described above with $ \succ $ as the ordering. Since every ordering has the same probability of $ 1/n! $, we get that
				\[ \Pr(\cl{M}_{2p}^G=v)=\dfrac{1}{n!}\sum_{\substack{\succ \text{ is an}\\ \text{order on }V}}\Pr(\cl{M}_{2p}(F(G,\succ))).  \]
				It is enough, then, to show that the out-edges of $ v $ do not influence $ \Pr(\cl{M}_{2p}(F(G,\succ))) $. Indeed, any edge from $ v $ to a vertex $ u $ such that $ v\succ u $ is not in $ F(G,\succ) $. The out-edges of $ v $ which remain in $ F(G,\succ) $ do not influence $ \Pr(\cl{M}_{2p}(F(G,\succ))) $, since $ \cl{M}_{2p} $ is IC on DAGs. 
				\end{proof}
				Note that $ \cl{M}_{2p}^G $ is not an extension of $ \cl{M}_{2p} $, that is, we cannot claim that in the family of DAGs, $ \cl{M}_{2p}^G=\cl{M}_{2p} $. In fact, it does not even have a bounded approximation ratio in the family of trees. Take, for example, the complete binary tree. After the random ordering and edges-removal, we will get a forest with about $ n/2 $ trees. Thus, the outcome of $ \cl{M}_{2p}^G $ on this tree is close to that of the random mechanism. Since the average influence in the binary tree is $ \log n $, the approximation ratio of $ \cl{M}_{2p}^G $ is at least $ n/\log n $. Nevertheless, in the next section we will test this mechanism on simulated `Twitter-like' networks and see that it gives a reasonable approximation.

\section{Experimental study}\label{sec: tests}
It has been long known that many real-world networks, and social networks in particular, feature the property that their degree distribution follows a power law\footnote{Meaning that the fraction of vertices which have in-degree $ k $ is proportional to $ k^{-\lambda} $ where $ \lambda>1 $ is a parameter of the network.} (see, for example, \cite{Caldarelli07,2002AdPhy..51.1079D}). Networks with this property are known as `scale-free' graphs. Several models have been offered to simulate scale-free graphs in a way which will resemble the constant growth and self-organising nature of social networks. In this section we present test results of mechanism $ \cl{M}_{2p} $ on scale-free DAGs, simulated according to the model of Barab\'asi and Albert (\cite{Barabasi509}); and results of mechanism $ \cl{M}_{2p}^G $ on scale-free general graphs, simulated according to the model of Aparicio, Villaz\'{o}n-Terrazas and \'{A}lvarez (\cite{Aparicio2015AMF}).
\subsection{Tests results of $ \cl{M}_{2p} $ on scale-free DAGs}
The first and perhaps most famous model for the emergence of scale-free graphs, is that of Barab\'asi and Albert. In their model the network starts with an initial set of $ s $ vertices and no edges. At each step we add a new vertex to the graph and connect it to $ k $ existing vertices, which are randomly selected according to distribution which is linearly proportionate to the current degree of the vertices. Thus a vertex with already high degree is more likely to get more new links (a sort of `rich get richer' phenomenon). This model was originally defined to generate undirected graphs. By directing the edges from the new vertex to the old ones, we get a directed model which is also acyclic. Moreover, the initial set of $ s $ vertices are precisely the sinks of the generated DAG. We used this model to simulate academic papers' citation networks, and tested the working of the Two Path mechanism on these networks. We have fixed the number of vertices at 10,000 and simulated networks while changing two parameters: the number of sinks (i.e., size of the initial set), and the out-degree of each new vertex. We observed that as we increase the number of sinks, the approximation ratio of our mechanism worsens. On the other hand, increasing the degree improves the outcome. Notice that this is in accordance to the reasoning of Theorem~\ref{thm: forests}, since a denser DAG with less sinks is more likely to be balanced. Two representative samples of this phenomenon can be seen in Figures~\ref{fig: DAG sinks} and \ref{fig: DAG degree}. In both, each dot represents an average of 100 random networks. For each network we ran the mechanism 100 times and averaged the outcome. Thus, each dot represents 10,000 runs of the mechanism with the appropriate network parameters. Figure~\ref{fig: DAG sinks} shows the deterioration of the ratio when we increase the number of sinks while holding the degree fixed. Figure~\ref{fig: DAG degree} shows the improvement of the ratio when we increase the degree while holding the number of sinks fixed. \\
We also noticed that typically, the approximation ratio was in the range of 3-10. This implies that in real-world networks, the Two Path mechanism can be expected to perform much better than the mathematical bounds we were able to prove.
\setlength{\belowcaptionskip}{0pt}
\setlength{\abovecaptionskip}{0pt}
\setlength{\intextsep}{0pt}

\begin{figure}[h!]
	\centering
	\includegraphics[width={0.85\linewidth}]{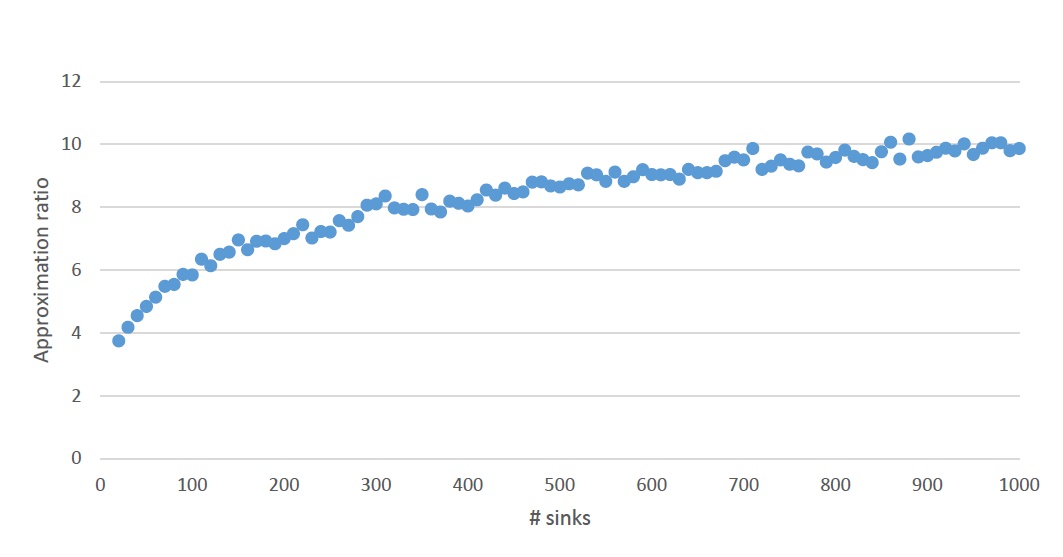}
	\caption{Ratio vs. Sinks. Out-degree = 10.}
	\label{fig: DAG sinks}	
\end{figure}
\setlength{\intextsep}{10pt}
\begin{figure}[h!]
	\centering
	\includegraphics[width={0.85\linewidth}]{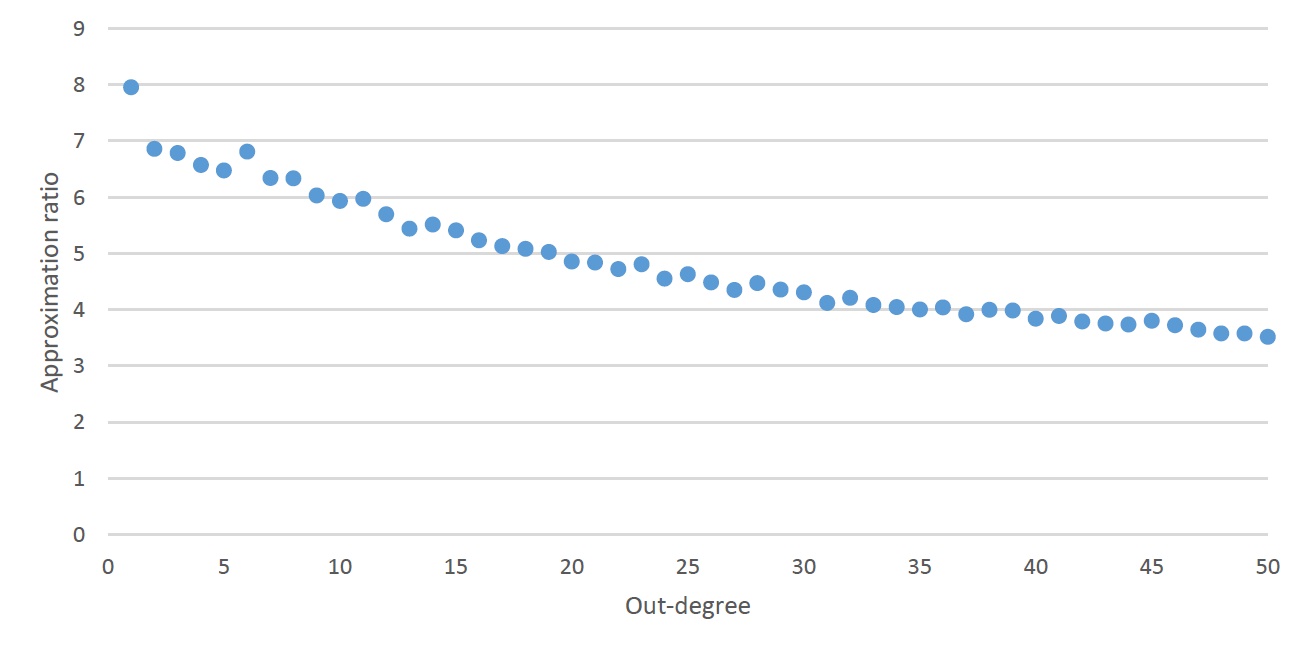}
	\caption{Ratio vs. Out-degree. \# sinks = 100.}
	\label{fig: DAG degree}	
\end{figure}

\subsection{Test results of $ \cl{M}_{2p}^G $ on scale-free networks}

In order to simulate Twitter-like networks, we used the model suggested in \cite{Aparicio2015AMF}. In that paper the authors show that their model is more suited to simulate the dynamic nature of social networks than the Barab\'{a}si-Albert model. In particular, they show that their model generates graphs which resemble Twitter in a few interesting parameters. In their model, at each step one of the following happens.

\begin{itemize}
	\item With probability $ p $ a new vertex is added with an out-going edge to an existing vertex. The target vertex is chosen according to a distribution linearly proportionate to the current in-degree of the vertices.
	\item With probability $ q $ a new vertex is added with an in-going edge from an existing vertex. The source vertex is chosen according to a distribution linearly proportionate to the current out-degree of the vertices.
	\item With probability $ r $ a new edge is added between two existing vertices. The source vertex is chosen according to a distribution linearly proportionate to the current out-degree of the vertices and the target vertex is chosen according to a distribution linearly proportionate to the current in-degree of the vertices.
\end{itemize}
Of course, we need to require that $ p+q+r=1 $. In addition, it is assumed that $ q<p $. We used this model with different parameters to test mechanism $ \cl{M}_{2p}^G $. Again, we fixed the number of vertices at 10,000 and were interested in the influence of two parameters on the performance of our mechanism. The first parameter is the average in-degree in the graph; in the parameters of the model, this is equal to $ 1/(p+q) $\footnote{At each step a new edge is created. The probability of a new vertex in every step is $ p+q $, hence the expected average degree will be $ 1/(p+q) $.}. The second is the probability of a `reverse edge', $\hat{q}=q/(p+q) $, which is the probability of a new vertex to get an edge. We think of this parameter as some indication of how far the graph will be from a DAG (although cycles can also be created on steps where an edge is created between two existing vertices). We observed as before that an increase in the average degree (meaning, an increase in the density of the graph) improves the performance of the mechanism. On the other hand, an increase in $ \hat{q} $ worsens the average outcome. The reason for that might be that some vertices reach high influence due to a few `lucky' edges from other high-influence vertices, while mechanism $ \cl{M}_{2p}^G $ performs better when there is higher correlation between high influence and high in-degree. Again we show two samples of our tests. Figures~\ref{fig: general degree} and \ref{fig: general backedges} both present the running of the mechanism on simulated networks with 10,000 vertices. Each dot represents 100 networks which are tested 100 times each, so a dot is an average of 10,000 outcomes. In the first figure we show the tests in which we held $ \hat{q} $ fixed while increasing the average degree, and in the second figure we show the tests in which we did the opposite.\\
We see that mechanism $ \cl{M}_{2p}^G $ gives a nicely bounded approximation ratio in most scenarios and we thus expect it to perform well in real-life social networks.
\setlength{\belowcaptionskip}{0pt}
\setlength{\abovecaptionskip}{0pt}
\setlength{\intextsep}{0pt}

\begin{figure}[h!]
	\centering
	\includegraphics[width={0.9\linewidth}]{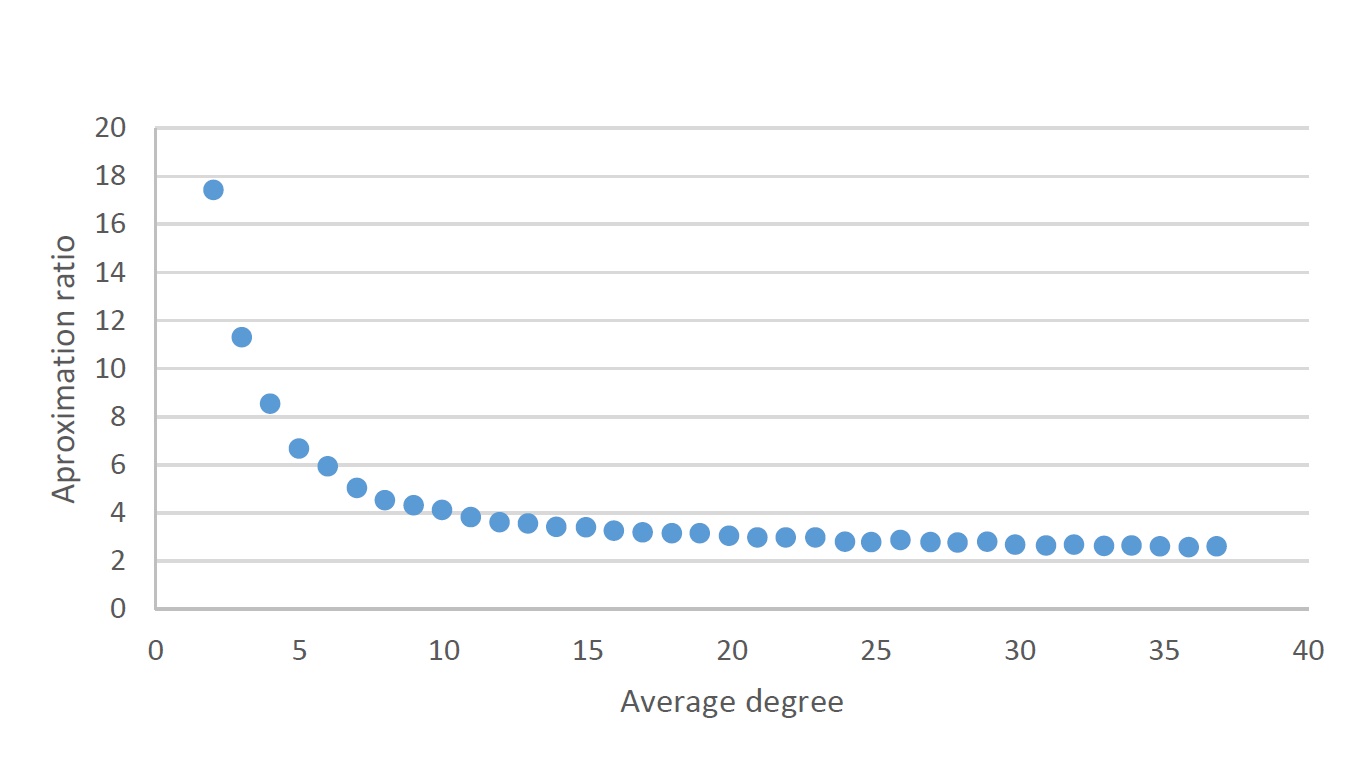}
	\caption{Ratio vs. Average degree. $ \hat{q}=0.15 $.}
	\label{fig: general degree}	
\end{figure}
\setlength{\belowcaptionskip}{-5pt}
\setlength{\intextsep}{0pt}
\begin{figure}[h!]
	\centering
	\includegraphics[width={0.9\linewidth}]{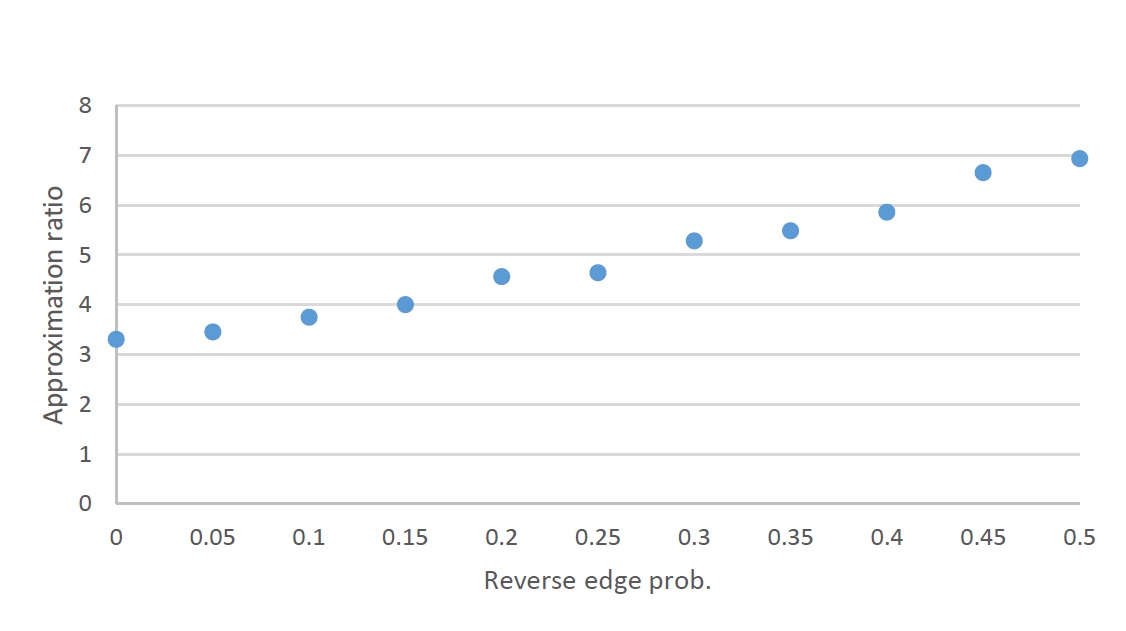}
	\caption{Ratio vs. `Reveres edge' Prob. Average degree = 10.}
	\label{fig: general backedges}	
\end{figure}

\newpage
\bibliographystyle{ACM-Reference-Format}
\balance
\bibliography{paper} 

\end{document}